\newcommand{\bfX}{ {\bf X} }
\newtheorem{theorem}{Theorem}
\newtheorem{lemma}[theorem]{Lemma}
\newtheorem{remark}{Remark}
\begin{document}
\title{Lattice Coding for the Two-way Two-relay Channel}
\author{Yiwei Song, Natasha Devroye, Huai-Rong Shao and Chiu Ngo
\thanks{Yiwei Song and
Natasha Devroye are with the Department of Electrical and Computer
Engineering, University of Illinois at Chicago, Chicago, IL 60607.
Email: ysong34, devroye@uic.edu. Huai-Rong Shao and Chiu Ngo are with Samsung Electronics, US R\&D Center (SISA)
San Jose, CA 95134.
Email: {hr.shao, chiu.ngo}@samsung.com. Portions of this work were performed at Samsung Electronics, US R\&D Center during Yiwei Song's summer internship in 2012.}
}
\maketitle

\abstract 
Lattice coding techniques may be used to derive achievable rate regions which outperform known independent, identically distributed (i.i.d.) random codes in multi-source relay networks and in particular the two-way relay channel. Gains stem from the ability to decode the sum of codewords (or messages) using lattice codes at higher rates than possible with i.i.d. random codes. Here we develop a novel  lattice coding scheme for the Two-way Two-relay Channel: $1 \leftrightarrow 2\leftrightarrow 3 \leftrightarrow 4$, where Node $1$ and $4$ simultaneously communicate with each other through two relay nodes $2$ and $3$.  Each node only communicates with its neighboring nodes. 
The key technical contribution is the lattice-based achievability strategy, where each relay is able to remove the noise while decoding the sum of several signals in a Block Markov strategy and then re-encode the signal into another lattice codeword using the so-called ``Re-distribution Transform''. This allows nodes further down the line to again decode sums of lattice codewords. 
This transform is central to improving the achievable rates, and ensures that the messages traveling in each of the two directions fully utilize the relay's power, even under asymmetric channel conditions. All decoders are lattice decoders and only a single nested lattice codebook pair is needed. 
%Only a single nested lattice code is needed for our scheme. 
The symmetric rate  achieved by the proposed lattice coding scheme is within $\frac{1}{2} \log 3$ bit/Hz/s of the symmetric rate capacity. 

\section{Introduction}
\label{sec:intro}

%Lattice codes show promise in deriving capacity-approaching achievable rate regions in additive white Gaussian noise (AWGN) relay networks \cite{Song:lattice, Nazer2011compute, Nam:IEEE, Narayanan:2010}. 
%In multi-source networks, the use of lattice codes enables relay nodes to decode a linear combination of the received codewords rather than decoding them individually as is the case for i.i.d. random codes. 
Lattice codes may be viewed as linear codes in Euclidean space;  the sum of two lattice codewords is still a codeword. This group property is exploited in several additive white Gaussian noise (AWGN) relay networks, such as \cite{Song:lattice, Nazer2011compute, Nam:IEEE, Narayanan:2010}, where it has been shown that lattice codes may outperform i.i.d. random codes in certain scenarios, particularly when interested in decoding a linear combination of the received codewords rather than the individual codewords. 
One such example  is the AWGN Two-way Relay channel, where two users wish to communicate with each other through a relay node \cite{Nam:IEEE, Narayanan:2010}. If the two users employ properly chosen lattice codewords, the relay node may decode the sum of the codewords from both users directly at higher rates than decoding them individually. 
It is then sufficient for the relay to broadcast this sum of codewords to both users since each user, knowing the sum and its own message, may determine the other desired message. %determine the other from the sum with knowledge of its own message. 

{\bf Past work.}  Lattice codes have been used to derive achievable rates for multi-user networks beyond the two-way relay channel. Nested lattice codes have been shown to be capacity achieving in the point-to-point Gaussian channel \cite{Erez:2004}, the Gaussian Multiple-access Channel \cite{Nazer2011compute}, Broadcast Channel \cite{Zamir:2002:binning}, and to achieve the same rates as those achieved by i.i.d. Gaussian codes in the Decode-and-Forward rate  \cite{Song:lattice, Nokleby2012} and Compress-and-Forward rates  \cite{Song:lattice} of the Relay Channel \cite{Cover:1979}. Lattice codes may further be used in achieving the capacity of Gaussian channels with interference or state known at the transmitter (but not receiver) \cite{gelfand} using a lattice equivalent \cite{Zamir:2002:binning} of dirty-paper coding (DPC) \cite{costa}.  The nested lattice approach of  \cite{Zamir:2002:binning} for the dirty-paper channel is extended to dirty-paper networks in \cite{ Philosof:DirtyMAC}.
%Kim:2004:muDPC, Akhbari:2009:DirtyRelay, Zaidi:2010:DirtyRelay
 %where in some scenarios lattice codes are interestingly shown to outperform random codes. 
 Lattice codes have also been shown to approximately achieve the capacity of the K-user interference channel \cite{ordentlich2012}, using the Compute-and-Forward framework for decoding linear equations of codewords (or equivalently messages) of \cite{Nazer2011compute}. 

We consider the Two-way Two-relay Channel: $1 \leftrightarrow 2\leftrightarrow 3 \leftrightarrow 4$ where two user Nodes $1$ and $4$ exchange information with each other through the relay nodes $2$ and $3$. 
This is related to the work of \cite{Popovski:2006a}, which considers the throughput of  i.i.d. random code-based Amplify-and-Forward and Decode-and-Forward approaches for this channel model, or the i.i.d. random coding based schemes of  \cite{Kim:ISIT2009} and \cite{Kim:multi} where there are furthermore links between all nodes. 
This model is also different from that in \cite{Pooniah:2008} where a two-way relay channel with two {\it parallel} (rather than sequential as in this work) relays are considered. 

{\bf Contributions.} The Two-way Two-relay Channel $1 \leftrightarrow 2\leftrightarrow 3 \leftrightarrow 4$ is a generalization of the Two-way Relay channel $1 \leftrightarrow 2\leftrightarrow 3$   to multiple relays. As lattice codes proved useful in the Two-way Relay channel in decoding the sum at the relay, it is natural to expect lattice codes to again perform well for the considered channel. 
We propose, for the first time, a lattice based scheme for this two-way line network, where all nodes transmit lattice codewords and each relay node decodes a sum of these codewords. 
This scheme may be seen as a generalization of the lattice based scheme of \cite{Nam:IEEE, Narayanan:2010} for the Two-way Relay Channel. However, this generalization is not straightforward due to the presence of multiple relays and hence the need to repeatedly be able to decode the sum of codewords. One way to enable this is to have the relays employ lattice codewords as well -- something not required in the Two-way Relay channel. 
In the Two-way Relay channel achievability schemes consists of two phases -- the multiple access and the broadcast phase. The scheme includes multiple Block Markov phases, where 
during each phase, the end users send new messages encoded by lattice codewords and the relays decode a combination of lattice codewords. The relays then perform a ``Re-distribution Transform" on the decoded lattice codeword combinations, and broadcasts the resulting lattice codewords. 
The novelty of our scheme lies in this ``Re-distribution Transform" which enables the messages traveling in both directions to fully utilize the relay transmit power. Furthermore, all decoders are lattice decoders (more computationally efficient than joint typicality decoders) and only a single nested lattice codebook pair is needed. 
%relays to fully utilize their transmit powers. 

{\bf Outline.} We first outline some definitions, notation, and technical lemmas for nested lattice codes in Section \ref{sec:prelim}. We outline the channel model in Section \ref{sec:model}. We then outline a lattice based strategy for the broadcast phase of the Two-way (single) Relay channel with asymmetric uplinks in Section \ref{sec:BC}, which includes the key technical novelty -- the ``Re-distribution Transform'' which allows relays to intuitively spread the signals traveling in both directions to utilize the relay's entire transmit power. We present the main achievable rate regions for the Two-way Two-relay channel in Section \ref{sec:procedure} before outlining extensions to half-duplex nodes and more than two relays in Section \ref{sec:extra}. Finally, we conclude in Section \ref{sec:conclusion}.

%\nrd{In general, weird usage of capital letters -- either make sure all "Two-way two-relay" are capitalized or not. Same for ''Re-distribution Transform''. I would say do NOT use capital letters for either (but could use acronyms if you want to avoid always writing it out, though I do not recommend).}

\section{Preliminaries on lattice codes and notation}
\label{sec:prelim}

We now define our notation for lattice codes in Subsection \ref{subsec:lattice}, define properties of nested lattice codes in Subsection \ref{subsec:nested} and technical lemmas in Subsection \ref{subsec:lemma}. % which will be needed in devising our two-way lattice-based scheme. 

\subsection{Lattice codes}
\label{subsec:lattice}

%We use ${\bf x}$ to denote column vector and ${\bf x}^T$ to denote its transpose.

Our notation for (nested) lattice codes for transmission over AWGN channels  follows that of  \cite{Zamir:2002:binning, nam:2009nested}; comprehensive treatments may be found in \cite{loeliger1997averaging, Zamir:2002:binning,  Erez:2004} and in particular \cite{zamir-lattices}.  An $n$-dimensional lattice $\Lambda$ is a discrete subgroup of Euclidean space $\mathbb{R}^n$ with Euclidean norm $|| \cdot ||$ under vector addition and may be expressed as all integral combinations of basis vectors ${\bf b_i}\in {\mathbb R}^n$
\[ \Lambda = \{ \lambda = {\bf B} \; {\bf i}: \; {\bf i}\in \mathbb{Z}^n\},\]
for $\mathbb{Z}$ the set of integers, $n\in {\mathbb Z}_+$, and ${\bf B} := [{\bf b_1} | {\bf b_2}| \cdots {\bf b_n}]$ the $n\times n$ generator matrix corresponding to the lattice $\Lambda$.  We use bold ${\bf x}$ to denote  column vectors, ${\bf x}^T$ to denote the transpose of the vector ${\bf x}$. All vectors lie in ${\mathbb R}^n$ unless otherwise stated, and all logarithms are base 2. Let ${\bf 0}$ denote the all zeros vector of length $n$, ${\bf I}$ denote the $n\times n$ identity matrix, and ${\cal N}(\mu,\sigma^2)$ denote a Gaussian random variable (or vector) of mean $\mu$ and variance $\sigma^2$.  Let $|{\cal C}|$ denote the cardinality of the set ${\cal C}$.  Define $C(x): = \frac{1}{2}\log_2\left(1+x \right)$. Further define or note that

$\bullet$ The {\it nearest neighbor lattice quantizer} of $\Lambda$ as \[ Q({\bf x}) = \arg \min_{\lambda\in \Lambda} ||{\bf x}-\lambda||;\]

$\bullet$ The $\mod \Lambda$ operation as ${\bf x} \mod \Lambda : = {\bf x} - Q({\bf x})$;

$\bullet$ The {\it Voronoi region of $\Lambda$} as the points closer to the origin than to any other lattice point \[\mathcal{V}:= \{{\bf x}:Q({\bf x}) = {\bf 0}\},\]
which is of volume $V: = \mbox{Vol}({\mathcal V})$ (also sometimes denoted by $V(\Lambda)$ or $V_i$ for lattice $\Lambda_i$);

$\bullet$ The {\it second moment per dimension of a uniform distribution over ${\mathcal V}$} as
\[ \sigma^2(\Lambda) : = \frac{1}{V}\cdot \frac{1}{n} \int_{\mathcal V} ||{\bf x}||^2 \; d{\bf x};\]

$\bullet$ For any ${\bf s} \in \mathbb{R}^n$ ,
\begin{align}
( \alpha ( {\bf  s} \mod \Lambda )) \mod \Lambda &= ( \alpha {\bf s} ) \mod \Lambda, \quad  \alpha \in \mathbb{Z}.  \label{eq:operation1}\\
\beta ( {\bf s} \mod \Lambda ) &=  ( \beta {\bf s}) \mod \beta \Lambda,  \quad \beta \in \mathbb{R}. \label{eq:operation2}
\end{align}

$\bullet$ The definitions of Rogers good and Poltyrev good lattices are stated in \cite{Song:lattice}; we will not need these definitions explicitly. Rather, we will use the results derived from lattices with these properties.

\subsection{Nested lattice codes}
\label{subsec:nested}

Consider two lattices $\Lambda$ and $\Lambda_c$ such that $\Lambda \subseteq \Lambda_c$ with fundamental regions ${\cal V}, {\cal V}_c$ of volumes $V, V_c$ respectively. Here $\Lambda$ is termed the {\it coarse} lattice which is a sublattice of  $\Lambda_c$,  the {\it fine} lattice, and hence $V \geq V_c$.  When transmitting over the AWGN channel, one may use the set  $ \mathcal{C}_{\Lambda_c, {\cal V}} = \{ \Lambda_c \cap \mathcal{V} (\Lambda) \} $ as the codebook. The coding rate $R$ of this {\it nested  $(\Lambda, \Lambda_c)$ {lattice pair}} is defined as
\[ R = \frac{1}{n} \log |\mathcal{C}_{\Lambda_c, {\cal V}}| = \frac{1}{n} \log \frac{V}{V_c}.\] 
%where $\rho = |\mathcal{C}_{\Lambda_c, {\cal V}}|^{\frac{1}{n}} = \left( \frac{V}{V_c} \right)^{\frac{1}{n}}$ is the nesting ratio of the {nested lattice pair}. 
{Nested lattice pairs} were shown to be capacity achieving for the AWGN channel \cite{Erez:2004}.

In this work, we only need one ``good'' nested lattice pair $\Lambda \subseteq \Lambda_c$, in which $\Lambda$ is both Rogers good and Poltyrev good and $\Lambda_c$ is Poltyrev good (see definitions of these in \cite{Song:lattice}). This lattice pair is used throughout the paper. 
The existence of such a pair of nested lattices may be guaranteed by \cite{Erez:2004}; we now describe the construction procedure of such a good nested lattice pair. Suppose the coarse lattice $\Lambda = {\bf B} Z^n$ is both Rogers good and Poltyrev good with second moment $\sigma^2(\Lambda) =1$.  With respect to this coarse lattice $\Lambda$, the fine lattice $\Lambda_c$ is generated by Construction A \cite{Erez:2004, Nazer2011compute}, which maps a codebook of a linear block code over a finite field into real lattice points. %\nrd{Do you actually mean "which maps a the codebook of a linear block code over a finite field into real lattice points."?}  
The generation procedure is:
\begin{itemize}
\item Consider the vector $ {\bf G} \in \mathbb{F}_{P_{prime}}^n$ with every element drawn from an i.i.d. uniform distribution from the finite field of order $P_{prime}$ (a prime number) which we take to be $\mathbb{F}_{P_{prime}} = \{ 0,1,2, \dots, P_{prime}-1\} $ under addition and multiplication modulo $P_{prime}$.
\item The codebook $\bar{\mathcal{C}}$ of the linear block code induced by {\bf G} is $\bar{\mathcal{C}} = \{ \bar{c} = {\bf G}w: w \in \mathbb{F}_{P_{prime}} \}$. %\nrd{consistency for $F_p$ or $ \mathbb{F}_{P_{prime}}$ everywhere in text }
\item Embed this codebook into the unit cube by scaling down by a factor of $P_{prime}$ and then place a copy at every integer vector: $\bar{\Lambda}_c = P_{prime}^{-1} \bar{\mathcal{C}} + \mathbb{Z}^n$.
\item Rotate $\bar{\Lambda}_c$ by the generator matrix of the coarse lattice to obtain the desired fine lattice: $\Lambda_c= {\bf B} \bar{\Lambda}_c$.
\end{itemize}
%Then the codebook $\mathcal{C} = \{ \Lambda_c \cap \mathcal{V} (\Lambda)\}$
%\nrd{why define again? Defined above as $\mathcal{C}_{\Lambda_c, {\cal V}}$}. 
Now let $\phi(\cdot)$ denote the one-to-one mapping between one element in the one dimensional finite field $w \in \mathbb{F}_{P_{prime}}$ to a point in $n$-dimension real space ${\bf t} \in  \mathcal{C}_{\Lambda_c, {\cal V}}$:
\begin{equation}
{\bf t}= \phi(w) = ( {\bf B} P_{prime}^{-1} {\bf G}w )\mod \Lambda,
\label{eq:phi}
\end{equation}
 with inverse mapping $w = \phi^{-1}({\bf t}) = ( {\bf G}^T {\bf G} )^{-1} {\bf G}^T ( P_{pirme} ( {\bf B}^{-1} \bf t \mod \mathbb{Z}^n) )$ (see \cite[Lemma 5 and 6]{Nazer2011compute}). The mapping operation $\phi(\cdot)$ defined here is used in the following lemmas.  %\nrd{Want to claim that $\phi()$ is one-to-one?}
%\nrd{Would state why need mapping here -- this will be used later one for..... (probably for the re-distribution?) Wasn't clear to me why you're actually doing the construction unless you need it, so tell the reader what you'll be using it for, either here or at the start. YES, THIS SCHEME RELY ON THIS CONSTRUCTION A.}

%\nrd{Notation about vectors and that we use column vectors and transpose etc.}

\subsection{Technical lemmas}
\label{subsec:lemma}

We first state several lemmas needed in the proposed two-way lattice based scheme. In the following,  ${\bf t}_{ai}$ and ${\bf t}_{bi} \in  \mathcal{C}_{\Lambda_c, {\cal V}}$ are generated from $w_{ai}$ and $w_{bi} \in \mathbb{F}_{P_{prime}}$  as ${\bf t}_{ai}= \phi(w_{ai}), {\bf t}_{bi}= \phi(w_{bi})$. Furthermore, let $\alpha, \alpha_i, \beta_i \in \mathbb{Z}$ such that $\frac{\alpha}{P_{prime}}, \frac{\alpha_i}{P_{prime}}, \frac{\beta_i}{P_{prime}} \notin \mathbb{Z}$ and $\theta \in \mathbb{R}$. We use $\oplus$, $\otimes$ and $\ominus$ to denote modulo $P_{prime}$ addition, multiplication, and subtraction over the finite field $\mathbb{F}_{P_{prime}}$.

\begin{lemma}
\label{lem:mapping}
There exists an one-to-one mapping between ${\bf v} = ( \sum_i \alpha_i \theta {\bf t}_{ai} + \sum_i \beta_i \theta {\bf t}_{bi} ) \mod \theta \Lambda$ and ${\bf u} = \bigoplus_i \alpha_i w_{ai} \oplus \bigoplus_i \beta_i w_{bi}$. %\nrd{Need to define $\oplus$ and $\bigoplus$ notation, maybe need a finite field section in notation/definitions}
\end{lemma}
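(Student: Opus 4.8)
The plan is to show directly that ${\bf v} = \theta\,\phi({\bf u})$, after which the claimed correspondence is immediate: $\phi$ is a bijection by construction (see \eqref{eq:phi}), and scaling by a nonzero real is a bijection of $\mathbb{R}^n$ (the statement is vacuous for $\theta=0$, so we read it for $\theta\neq 0$). First I would peel off the real scalar using the identity \eqref{eq:operation2}:
\[
{\bf v} = \Big(\theta\sum_i \alpha_i {\bf t}_{ai} + \theta\sum_i \beta_i {\bf t}_{bi}\Big)\mod \theta\Lambda
= \theta\Big[\Big(\sum_i \alpha_i {\bf t}_{ai} + \sum_i \beta_i {\bf t}_{bi}\Big)\mod \Lambda\Big],
\]
so it suffices to identify the bracketed vector with $\phi({\bf u})$.

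Next I would substitute ${\bf t}_{ai} = \phi(w_{ai}) = ({\bf B}P_{prime}^{-1}{\bf G}\,w_{ai})\mod\Lambda$ and combine \eqref{eq:operation1} (each $\alpha_i,\beta_i$ is an integer) with the standard distributivity of the $\mod\Lambda$ operation over vector addition to collapse the whole combination into a single application of $\phi$'s defining expression, namely $\big(\sum_i\alpha_i{\bf t}_{ai}+\sum_i\beta_i{\bf t}_{bi}\big)\mod\Lambda = ({\bf B}P_{prime}^{-1}{\bf G}\,m)\mod\Lambda$ with integer scalar $m := \sum_i\alpha_i w_{ai}+\sum_i\beta_i w_{bi}$. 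The key observation is that Construction A is linear modulo $P_{prime}$: since ${\bf G}$ has integer entries, ${\bf B}P_{prime}^{-1}{\bf G}\cdot P_{prime}k = {\bf B}{\bf G}k \in {\bf B}\mathbb{Z}^n = \Lambda$ for every $k\in\mathbb{Z}$, so replacing $m$ by $m\bmod P_{prime}$ alters ${\bf B}P_{prime}^{-1}{\bf G}\,m$ only by a lattice point and does not change its value $\mod\Lambda$. Hence the bracketed vector equals $\phi(m\bmod P_{prime})$, and reducing the integer sum $m$ modulo the prime $P_{prime}$ is precisely the finite-field expression $\bigoplus_i\alpha_i w_{ai}\oplus\bigoplus_i\beta_i w_{bi} = {\bf u}$ (each integer $\alpha_i,\beta_i$ read modulo $P_{prime}$). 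Combined with the first display, this yields ${\bf v}=\theta\,\phi({\bf u})$.

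To finish, I would invoke that $\phi:\mathbb{F}_{P_{prime}}\to\mathcal{C}_{\Lambda_c,{\cal V}}$ is one-to-one — its explicit inverse is recorded just after \eqref{eq:phi}, following \cite[Lemmas 5 and 6]{Nazer2011compute} — so that ${\bf u}\mapsto\theta\,\phi({\bf u})={\bf v}$ is a composition of bijections and hence itself a bijection. The hypotheses $\alpha_i/P_{prime},\beta_i/P_{prime}\notin\mathbb{Z}$ are not strictly required for this statement, but they guarantee every finite-field coefficient is nonzero, so ${\bf u}$ depends non-trivially on each $w_{ai},w_{bi}$ (the property the later recovery lemmas use). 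I do not expect a genuine obstacle; the only point requiring care is the bookkeeping in the middle paragraph — that $\mod\Lambda$ commutes with integer linear combinations up to an outer $\mod\Lambda$, and that the $P_{prime}^{-1}{\bf G}$ scaling built into \eqref{eq:phi} makes the integer multiplier collapse modulo $P_{prime}$.
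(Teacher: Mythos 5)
Your proof is correct, and it takes a more self-contained route than the paper. The paper disposes of the $\theta$ by \eqref{eq:operation2} (as you do) and then immediately cites \cite[Lemma 6]{Nazer2011compute} for the one-to-one correspondence between $\big(\sum_i\alpha_i{\bf t}_{ai}+\sum_i\beta_i{\bf t}_{bi}\big)\mod\Lambda$ and $\bigoplus_i\alpha_i' w_{ai}\oplus\bigoplus_i\beta_i' w_{bi}$ (coefficients reduced modulo $P_{prime}$), remarking only that the reduction of the integer coefficients to $\alpha_i',\beta_i'$ does not change the finite-field sum. You instead re-derive that cited lemma from the definition of Construction A: you expand each ${\bf t}_{ai}$ via \eqref{eq:phi}, use \eqref{eq:operation1} together with the additivity of $\mod\Lambda$ to pull the integer combination inside a single $\mod\Lambda$, and then use the observation that ${\bf B}P_{prime}^{-1}{\bf G}\cdot P_{prime}k\in\Lambda$ (because ${\bf G}$ has integer entries) to reduce the integer multiplier modulo $P_{prime}$, landing exactly on $\phi({\bf u})$. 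This buys transparency — the $\mathbb{F}_{P_{prime}}$-linearity of $\phi$ is exhibited rather than invoked — at the cost of a few lines, and your closing remark that the hypotheses $\alpha_i/P_{prime},\beta_i/P_{prime}\notin\mathbb{Z}$ are not actually used in this lemma (only later, via Lemma~\ref{lem:prime}) is accurate and worth noting. One minor implicit assumption, shared with the paper, is that ${\bf G}\neq{\bf 0}$ so that $\phi$ is indeed injective; this holds with probability approaching one in the random Construction~A ensemble.
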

\begin{proof}
%\nrd{This proof does not make sense.}
The proof follows from \cite[Lemma 6]{Nazer2011compute}, where it is shown that there is an one-to-one mapping between $ \theta^{-1} {\bf v} = ( \sum_i \alpha_i {\bf t}_{ai} + \sum_i \beta_i  {\bf t}_{bi} ) \mod  \Lambda$ and ${\bf u}' = \bigoplus_i \alpha'_i w_{ai} \oplus \bigoplus_i \beta'_i w_{bi}$ %\nrd{this reads exactly the same as yours except for the $\alpha_i'$ and $\beta_i'$.... }
, where $\alpha'_i = \alpha_i \mod P_{prime}$ and $\beta'_i = \beta_i \mod P_{prime}$. This one-to-one mapping is given by $ \theta^{-1} {\bf v} = \phi^{-1}({\bf u}')$ and ${\bf u}' = \phi(\theta^{-1}{\bf v})$. %\nrd{I don't understand how you get rid of the $\mod P_{\prime}$?}. 
Then  observe that $\bigoplus_i \alpha_i w_{ai} \oplus \bigoplus_i \beta_i w_{bi} = \bigoplus_i \alpha'_i w_{ai} \oplus \bigoplus_i \beta'_i w_{bi}$ by the properties of modulo addition and multiplication \cite{}. 
%\nrd{modulo operation or modulo addition? NATASHA, CAN YOU REFERENCE THE WIKI }
Thus, the one-to-one mapping between ${\bf v}$ and ${\bf u}$ is given by $ {\bf v} =\theta  \phi^{-1}({\bf u})$ and ${\bf u} = \phi(\theta^{-1}{\bf v})$. 
\end{proof}

\begin{lemma}
\label{lem:prime}
There exists an one-to-one mapping between $\alpha \otimes w$ and $w$. %where $\frac{\alpha}{P_{prime}} \notin \mathbb{Z}$.
\end{lemma}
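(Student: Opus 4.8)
The plan is to invoke the fact that $\mathbb{F}_{P_{prime}}$ is a field (since $P_{prime}$ is prime) and that multiplication by a nonzero element of a field is a bijection. First I would note that the hypothesis $\frac{\alpha}{P_{prime}} \notin \mathbb{Z}$ is exactly the statement that $\alpha \bmod P_{prime} \neq 0$, i.e., $\alpha$ (reduced modulo $P_{prime}$) is a nonzero element of $\mathbb{F}_{P_{prime}}$. Hence $\alpha$ has a multiplicative inverse $\alpha^{-1} \in \mathbb{F}_{P_{prime}}$ with $\alpha \otimes \alpha^{-1} = 1$.

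Next I would exhibit the mapping and its inverse explicitly: the forward map is $w \mapsto \alpha \otimes w$, and the claimed inverse is $u \mapsto \alpha^{-1} \otimes u$. To verify these compose to the identity, I would use associativity and commutativity of $\otimes$ over $\mathbb{F}_{P_{prime}}$: $\alpha^{-1} \otimes (\alpha \otimes w) = (\alpha^{-1} \otimes \alpha) \otimes w = 1 \otimes w = w$, and symmetrically $\alpha \otimes (\alpha^{-1} \otimes u) = u$. This shows the forward map is both injective and surjective on the finite set $\mathbb{F}_{P_{prime}}$, establishing the one-to-one correspondence. (Alternatively, one may argue injectivity directly: if $\alpha \otimes w_1 = \alpha \otimes w_2$ then $\alpha \otimes (w_1 \ominus w_2) = 0$, and since a field has no zero divisors and $\alpha \neq 0$, we get $w_1 = w_2$; surjectivity then follows from finiteness and equal cardinality of domain and codomain.)

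There is essentially no hard step here; this lemma is an immediate consequence of basic finite-field arithmetic and will be used as an elementary building block (together with Lemma \ref{lem:mapping}) in the analysis of the ``Re-distribution Transform.'' The only minor point requiring care is the translation of the side condition $\frac{\alpha}{P_{prime}} \notin \mathbb{Z}$ into the statement $\alpha \not\equiv 0 \pmod{P_{prime}}$, so that invertibility of $\alpha$ in $\mathbb{F}_{P_{prime}}$ is guaranteed; everything else is routine.
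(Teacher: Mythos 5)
Your proof is correct and in substance the same as the paper's: the paper argues injectivity directly by divisibility (from $\alpha \otimes w_1 = \alpha \otimes w_2$ it deduces $\frac{\alpha}{P_{prime}}(w_1-w_2)\in\mathbb{Z}$, and since $P_{prime}$ is prime and $P_{prime}\nmid\alpha$ it concludes $w_1=w_2$ using $|w_1-w_2|\le P_{prime}-1$), which is exactly your parenthetical alternative. Your primary route via the explicit inverse $\alpha^{-1}$ in $\mathbb{F}_{P_{prime}}$ is just a cleaner packaging of the same use of primality, so nothing is missing.
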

%\begin{proof} \nrd{Made up this proof -- what do you think? } Suppose $\alpha \otimes w_1 = \alpha \otimes w_2$. Then $\alpha(w_1 - w_2) = \kappa P_{prime}$ for some integer $\kappa$. Re-writing this, we have $\frac{\alpha}{P_{prime}} (w_1-w_2) = \kappa$ for some integer $\kappa$.  If $w_1 = w_2$ then $\kappa = 0$ satisfies this. Otherwise, suppose $w_1\neq w_2$. Then $a:=(w_1-w_2) \in \mathbb{F}_{P_{prime}}$ has inverse element $a^{-1} \in \mathbb{F}_{P_{prime}}$ and hence $\frac{\alpha}{P_{prime}} = \kappa a^{-1}$ which is integer and hence leads to a contradiction. Hence $w_1=w_2$ and the mapping is one-to-one.
%\end{proof}

\begin{proof}
Recall that $\frac{\alpha}{P_{prime}} \notin \mathbb{Z}$. 
Suppose $\alpha \otimes w_1 = \alpha \otimes w_2$. Then $\alpha(w_1 - w_2) = \kappa P_{prime}$ for some integer $\kappa$. Re-writing this, we have $\frac{\alpha}{P_{prime}} (w_1-w_2) = \kappa$ for some integer $\kappa$. This implies that either $\frac{\alpha}{P_{prime}} \in \mathbb{Z}$ or $\frac{w_1 - w_2}{P_{prime}} \in \mathbb{Z}$. Since $\frac{\alpha}{P_{prime}} \notin \mathbb{Z}$, $\frac{w_1 - w_2}{P_{prime}} \in \mathbb{Z}$. Thus $w_1 - w_2 =0$ since $-(P_{prime}-1) \leq w_1 - w_2 \leq P_{prime} -1$. Hence $w_1 =w_2$.
\end{proof}

\medskip
\begin{lemma}
\label{lem:codingrate}
If $w_{ai}$ and $w_{bi}$ are uniformly distributed over $\mathbb{F}_{P_{prime}}$,  then $( \sum_i \alpha_i \theta {\bf t}_{ai} + \sum_i \beta_i \theta {\bf t}_{bi} ) \mod \theta \Lambda$ is uniformly distributed over $\{ \theta \Lambda_c \cap  \mathcal{V} (\theta \Lambda) \}$.
\end{lemma}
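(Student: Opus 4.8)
The plan is to reduce the statement to a known uniformity result about Construction-A nested lattices by peeling off the scalar $\theta$ and then invoking the one-to-one correspondence of Lemma~\ref{lem:mapping}. First I would use property~\eqref{eq:operation2} to write
\[
\Bigl(\sum_i \alpha_i \theta\,{\bf t}_{ai} + \sum_i \beta_i \theta\,{\bf t}_{bi}\Bigr) \bmod \theta\Lambda
= \theta\,\Bigl[\Bigl(\sum_i \alpha_i {\bf t}_{ai} + \sum_i \beta_i {\bf t}_{bi}\Bigr) \bmod \Lambda\Bigr],
\]
so that it suffices to show that ${\bf v}' := \bigl(\sum_i \alpha_i {\bf t}_{ai} + \sum_i \beta_i {\bf t}_{bi}\bigr) \bmod \Lambda$ is uniform over $\Lambda_c \cap \mathcal{V}(\Lambda) = \mathcal{C}_{\Lambda_c,{\cal V}}$; multiplying a uniform random vector on a finite set by the fixed scalar $\theta$ yields a uniform random vector on the scaled set $\theta\Lambda_c \cap \mathcal{V}(\theta\Lambda)$, which is exactly the claimed support.

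Next I would establish that ${\bf v}'$ is uniform on $\mathcal{C}_{\Lambda_c,{\cal V}}$. By Lemma~\ref{lem:mapping} (with $\theta = 1$, i.e.\ the underlying \cite[Lemma~6]{Nazer2011compute} correspondence) there is a bijection between ${\bf v}'$ and the finite-field symbol ${\bf u}' = \bigoplus_i \alpha'_i w_{ai} \oplus \bigoplus_i \beta'_i w_{bi}$ with $\alpha'_i = \alpha_i \bmod P_{prime}$, $\beta'_i = \beta_i \bmod P_{prime}$, the bijection being ${\bf v}' = \phi^{-1}({\bf u}')$, $\,{\bf u}' = \phi({\bf v}')$. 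A bijective image of a uniform random variable is uniform, so it is enough to show ${\bf u}'$ is uniform over $\mathbb{F}_{P_{prime}}$. Because the hypotheses guarantee $\frac{\alpha_i}{P_{prime}},\frac{\beta_i}{P_{prime}} \notin \mathbb{Z}$, each $\alpha'_i$ and $\beta'_i$ is a nonzero element of the field $\mathbb{F}_{P_{prime}}$ (prime order), hence invertible; at least one summand is present, so fixing all $w$'s but one, say $w_{a1}$, the map $w_{a1}\mapsto \alpha'_1 \otimes w_{a1} \oplus (\text{fixed})$ is a bijection of $\mathbb{F}_{P_{prime}}$ onto itself (this is essentially Lemma~\ref{lem:prime} plus a translation). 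Since $w_{a1}$ is uniform and independent of the rest, ${\bf u}'$ is uniform by the standard fact that a uniform variable pushed through a bijection, with the remaining variables marginalized out, stays uniform.

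Finally I would assemble the pieces: ${\bf u}'$ uniform on $\mathbb{F}_{P_{prime}}$ $\Rightarrow$ ${\bf v}' = \phi^{-1}({\bf u}')$ uniform on $\mathcal{C}_{\Lambda_c,{\cal V}} = \Lambda_c \cap \mathcal{V}(\Lambda)$ $\Rightarrow$ $\theta{\bf v}'$ uniform on $\theta\Lambda_c \cap \mathcal{V}(\theta\Lambda)$, which is the assertion. The only real subtlety — and the step I expect to need the most care — is the bijectivity of the finite-field combination ${\bf u}'$ as a function of the $w$'s: one must check that the coefficients really are nonzero mod $P_{prime}$ (this is where the assumption $\alpha_i/P_{prime}\notin\mathbb{Z}$ is used, exactly as in Lemma~\ref{lem:prime}) and that at least one term is genuinely present, so that conditioning on all but one symbol gives an affine bijection of $\mathbb{F}_{P_{prime}}$. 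Everything else is a routine application of \eqref{eq:operation2}, Lemma~\ref{lem:mapping}, and the trivial observation that bijections and fixed nonzero scalings preserve uniformity.
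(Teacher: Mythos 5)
Your proof is correct and follows essentially the same route as the paper's: factor out $\theta$ via \eqref{eq:operation2}, pass to the finite field through $\phi$ and Lemma~\ref{lem:mapping}, argue uniformity of $\bigoplus_i \alpha_i w_{ai} \oplus \bigoplus_i \beta_i w_{bi}$ using the invertibility of the nonzero coefficients (Lemma~\ref{lem:prime}), and map back. You are in fact slightly more careful than the paper in spelling out why the finite-field combination is uniform (conditioning on all but one symbol and using independence), which the paper merely asserts.
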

\begin{proof}
As in the proof of Lemma \ref{lem:mapping}, $(\sum_i \alpha_i \theta {\bf t}_{ai} + \sum_i \beta_i \theta {\bf t}_{bi} ) \mod \theta \Lambda  = \theta (  (\sum_i \alpha_i  {\bf t}_{ai} + \sum_i \beta_i {\bf t}_{bi} ) \mod \Lambda ) = \theta \phi (\bigoplus_i \alpha_i w_{ai} \oplus \bigoplus_i \beta_i w_{bi})$. Since $w_{ai}$ and $w_{bi}$ are uniformly distributed over $\mathbb{F}_{P_{prime}}$, $ \alpha_i w_{ai}$ and $\beta_i w_{bi}$ are uniformly distributed over $\mathbb{F}_{P_{prime}}$ by Lemma \ref{lem:prime}.  Then, $\bigoplus_i \alpha_i w_{ai} \oplus \bigoplus_i \beta_i w_{bi}$ is uniformly distributed over $\mathbb{F}_{P_{prime}}$, and  $ \phi (\bigoplus_i \alpha_i w_{ai} \oplus \bigoplus_i \beta_i w_{bi})$ is uniformly distributed over $\{ \Lambda_c \cap \mathcal{V} (\Lambda) \}$, and finally $\theta \phi (\bigoplus_i \alpha_i w_{ai} \oplus \bigoplus_i \beta_i w_{bi})$ is uniformly distributed over $\{ \theta \Lambda_c \cap \mathcal{V} (\theta \Lambda) \}$.
%\nrd{Why does $\phi()$ preserve uniform? ONE-TO-ONE AND RANDOMLY GENERATE MATRIX Did we show one-to-one for $\phi()$ (even though this is true, we should probably mention it, right?) WE HAVE SHOWN IT. }
%and $(\sum_i \theta {\bf t}_{ai} + \sum_i \theta {\bf t}_{bi} ) \mod \theta \Lambda = \phi( \bigoplus_i w_{ai} \oplus \bigoplus_i  w_{bi})$. From Lemma \ref{lem:prime}, there is an one-to-one mapping between $\bigoplus_i \alpha_i w_{ai} \oplus \bigoplus_i \beta_i w_{bi}$ and $\bigoplus_i w_{ai} \oplus \bigoplus_i  w_{bi}$, and so that $\bigcup $
\end{proof}

%\nrd{Now I would add one more "technical lemma"  like Lemma 5 here for decoding the sum -- is that all we need? Something that you will use over and over. I think you have all the needed definitions to put Lemma 4 here instead of in the next section.
%\begin{lemma}
%\label{lem:rate} {\it Achieving $\frac{1}{2}\log(SNR)$ on an equivalent effective channel.}
%if you have this channel model ${\bf Y} =  h{\bf X} + {\bf Z}$ where $h$ is a constant, ${\bf X}$ is uniformly distributed over Roger's good (?? want this ?? want to mention rate or not?), and ${\bf Z}$ is sum of Gaussian and uniformly distributed over Roger's good of equivalent total variance $N$, then call the effective SNR $SNR_{eff}: = \frac{h^2 E[{\bf X}^T{\bf X}]}{N}$, then we can achieve rates up to $R<\frac{1}{2}\log\left( SNR_{eff}\right)$.
%\end{lemma}
 %Then prove this (or refer to right spot) and use it over and over in our proofs.}

\section{Channel Model}
\label{sec:model}

\begin{figure}
%\begin{figure}
\centering
\includegraphics[width=12cm]{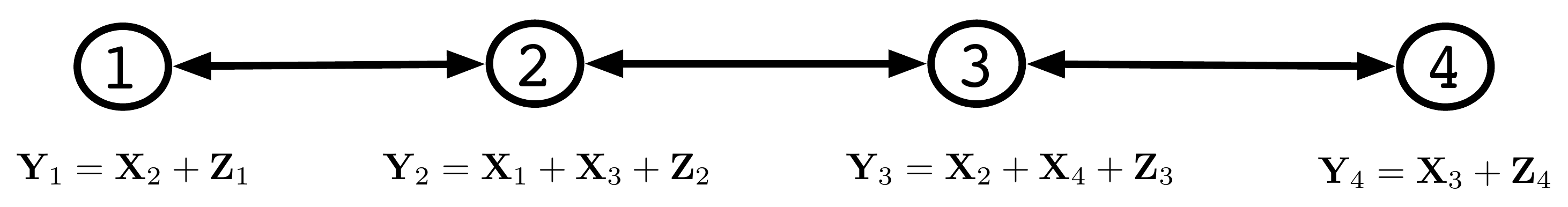}
\caption{The Gaussian  Two-way Two-relay Channel Model.}
\label{fig:model}
\end{figure}

The Gaussian  Two-way Two-relay Channel describes a wireless communication scenario where two source nodes (Node $1$ and $4$) %\nrd{Why do general $N$ is only going to look at   $N=4$? I would go directly to $N=4$ and then somewhere make a short remark than can extend to general $N$.}
simultaneously communicate with each other through multiple full-duplex relays (Node $2$ and $3$) and multiple hops as shown in Figure \ref{fig:model}. Each node can only communicate with its neighboring nodes. The channel model may be expressed as (all bold symbols are $n$ dimensional)
\begin{align*}
{\bf Y}_1 &= {\bf X}_2 + {\bf Z}_1 \\
{\bf Y}_2 &= {\bf X}_1 + {\bf X}_3 + {\bf Z}_2\\
{\bf Y}_3 &= {\bf X}_2 + {\bf X}_4 + {\bf Z}_3\\
{\bf Y}_4 &= {\bf X}_3 + {\bf Z}_4
\end{align*}
where ${\bf Z_i} $ ($i \in \{1,2,3,4\}$) is an i.i.d. Gaussian noise vector with variance $N_i$: ${\bf Z}_i \sim \mathcal{N} ( {\bf 0}, N_i {\bf  I})$, and the input ${\bf X}_i$ is subject to the transmit power constraint $P_i$: $\frac{1}{n} E({\bf X}_i^T{\bf X}_i) \leq P_i$. Note that since we can always subtract the signal transmitted by the node itself, they are omitted in the channel model expression. Also note the arbitrary power constraints and noise variances but unit channel gains. 

An $(2^{nR_a}, 2^{nR_b}, n)$ code for the Gaussian Two-way Two-relay channel consists of the two sets of messages $w_a, w_b$ uniformly distributed 
over  ${\cal M}_a: = \{1,2,\cdots , 2^{nR_a}\}$ and ${\cal M}_b: = \{1,2,\cdots , 2^{nR_b}\}$ respectively,  and two encoding functions $X_1^n: {\cal M}_a \rightarrow {\mathbb R}^n$ (shortened to ${\bf X_1}$) and $X_4^n: {\cal M}_b \rightarrow {\mathbb R}^n$ (shortened to ${\bf X_4}$), satisfying the 
power constraints $P_1$ and $P_4$ respectively, two sets of relay functions $\{f_{k,j}\}_{j=1}^n$ ($k=2,3$) such that the relay channel input at time $j$ is a function of the 
previously received relay channel outputs from channel uses $1$ to $j-1$, $X_{k,j} = f_{k,j}(Y_{k,1}, \cdots,Y_{k,j-1})$, and finally two 
decoding functions $g_1: {\cal Y}_1^n \times {\cal M}_a \rightarrow {\cal M}_{b}$ and  $g_4: {\cal Y}_4^n \times {\cal M}_b \rightarrow {\cal M}_{a}$ which yield the message estimates $\hat{w}_{b}: = g_1(Y_1^n, w_a)$ and $\hat{w}_{a}: = g_2(Y_4^n, w_b)$ respectively. %\nrd{YIWEI: fix the notation -- use $w_a$ and $w_b$ but then also use $w_1$ and $w_4$ in the last definition -- please fix to be consistent. Why does node 1 have $w_a$? Maybe call them nodes a, b and relay 1 and relay 2? or better idea?} 
%We let ${\bf X_R}$ denote $(X_{R,1}, X_{R,2}, \cdots, X_{R,n})$ for blocklength $n$. 
We define the average probability of error of the code to be $P_{n,e} : = \frac{1}{2^{n(R_a+R_b)}} \sum_{w_a\in {\cal M}_a, w_b\in {\cal M}_b} \Pr\{(\hat{w_a}, \hat{w_b}) \neq (w_a,w_b)|(w_a,w_b) \mbox{ sent}\}$. The rate 
pair $(R_a,R_b)$ is then said to be achievable by the two-relay channel if, for any $\epsilon>0$ and for sufficiently large $n$, there 
exists an $(2^{nR_a},2^{nR_b},n)$ code such that $P_{n,e} < \epsilon$. The capacity region of the Gaussian Two-way Two-relay channel is the 
supremum of the set of achievable rate pairs.

\section{Lattice Codes in the BC Phase of the Two-way Relay Channel}
\label{sec:BC}

%\subsection{Preview of lattice codes in the Two-way Relay Channel}
The work \cite{Nam:IEEE, Narayanan:2010} introduces a two-phase lattice scheme for the Gaussian Two-way Relay Channel $1 \leftrightarrow 2 \leftrightarrow 3$, where two user nodes $1$ and $3$ exchange information through a single relay node $2$ (all definitions are analogous to the previous section):  %\nrd{This notation is not in agreement with the previous two-relat case - i.e. there use 1,2,3,4 here use 1,R,2 OR there use a,R1,R2,b and here 1,R,2.... NEED TO MAKE A CONSISTENT NOTATION FOR BOTH AND CHANGE EVERYTHING ACCORDINGLY. WE CAK TALK ABOUT NOTATION TO AGREE BEFORE YOU CHANGE ALL.} 
 the Multiple-access Channel (MAC) phase and the Broadcast Channel (BC) phase. In the MAC phase, the relay receives a noisy version of the sum of two signals from both users as in a multiple access channel (MAC). If the codewords are from nested lattice codebooks, the relay may decode the sum of the two codewords directly without decoding them individually. This is sufficient for this channel, as then, in the BC phase, the relay may broadcast the sum of the codewords to both users who may determine the other message using knowledge of  their own transmitted message.  In the scheme of \cite{Nam:IEEE}, the relay re-encodes the decoded sum of the  codewords into a codeword from an i.i.d. random codebook while \cite{Narayanan:2010} uses a lattice codebook in the downlink. 

In extending the schemes of \cite{Nam:IEEE, Narayanan:2010} to multiple relays we would want to use lattice codebooks in the BC phase, as in \cite{Narayanan:2010}. This would,  for example, allow the signal sent by Node $2$  to be aligned  with Node $4$'s transmitted signal (aligned is used to mean that the two codebooks are nested) in the Two-way Two-relay Channel: $1 \leftrightarrow 2 \leftrightarrow 3  \leftrightarrow 4$ and hence enable the decoding of the sum of codewords again at Node 3.  However, the scheme of \cite{Narayanan:2010} is only applicable to channels in which the SNR from the users to the relay are symmetric, i.e. $\frac{P_1}{N_2} = \frac{P_3}{N_2}$ . In this case  the relay can simply broadcast the decoded (and possibly scaled) sum of codewords sum without re-encoding it.  Thus, before tackling the Two-way Two-relay channel, we first devise a lattice-coding scheme for the  BC phase in the Two-way Relay Channel with {\it arbitrary}  uplink SNRs $\frac{P_1}{N_2} \neq \frac{P_3}{N_2}$ . %This will  be used repeatedly for the Two-way Two-Relay channel.

%However, the lattice scheme in the Two-way Relay Channel cannot be easily generalized to multiple relay cases. 
%We require each node always encode with lattice codes since, for example, the signal sent by Node $2$ also needs to be aligned with Node $4$'s transmitted signal in Two-way Two-relay Channel: $1 \leftrightarrow 2 \leftrightarrow 3  \leftrightarrow 4$. Thus, we first need a lattice codes scheme for the BC phase in the Two-way Relay Channel with arbitrary SNR. 

\subsection{Lattice codes for the BC phase for the general Gaussian Two-way Relay Channel}

%\nrd{I see the point of the next paragraph but need to be better written.}
%To utilize the advantages of lattice codes in Two-way Multi-relay Channel, we need a lattice codes scheme in the BC phase \nrd{this is too direct -- need to explain what the BC phase IS first, how it's usually done, and why not trivial extension to lattices.} since, for example, the signal broadcasted by Node 2 also needs to be aligned with Node 4's transmitted signal \nrd{this can be better explained -- this is why we need lattices}. So the broadcast scheme of \cite{Nam:IEEE}, which uses random codes, cannot be applied to this model, while the lattice codes scheme used in the BC phase of \cite{Narayanan:2010} can only be applied to the case of symmetric SNR in MAC phase \nrd{why?}. Thus, we first need a lattice codes scheme for the BC phase in the Two-way Single-relay Channel.

In this section, we design a lattice coding scheme for the BC phase of the Gaussian Two-way (single-relay) Channel with arbitrary uplink SNR -- i.e. not restricted to symmetric SNRs as in \cite{Narayanan:2010}. 
A similar technique will be utilized in the lattice coding scheme for Two-way Two-relay Channel in Section \ref{sec:procedure}.

The channel model is the same as in \cite{Nam:IEEE}: two users Node $1$ and $3$ communicate with each other through the relay Node $2$. The channel model is expressed as 
\begin{align*}
{\bf Y}_1 &= {\bf X}_2 + {\bf Z}_1 \\
{\bf Y}_2&= {\bf X}_1 + {\bf X}_3 + {\bf Z}_2 \\
{\bf Y}_3 &= {\bf X}_2 + {\bf Z}_3
\end{align*}
where ${\bf Z_i} $ ($i \in \{1,2,3\}$) is an i.i.d. Gaussian noise vector with variance $N_i$: ${\bf Z}_i \sim \mathcal{N} ( {\bf 0}, N_i {\bf  I})$, and the input ${\bf X}_i$ is subject to the transmitting power constraint $P_i$: $\frac{1}{n} E({\bfX}_i^T{\bfX}_i) \leq P_i$. 
%Note that since we can always subtract the signal transmitted by the node itself, they are omitted in the channel model expression.
Similar definitions of codes and achievability as in Section \ref{sec:model} are assumed. 

We will devise an achievability scheme which uses lattice codes in both the MAC phase and BC phase. For simplicity, to demonstrate the central idea of a lattice-based BC phase which is going to be used in the Two-way Two-relay Channel, we do not use dithers nor  MMSE scaling as in \cite{Erez:2004, Nam:IEEE, Narayanan:2010}\footnote{Dithers and MMSE scaling allows one to go from achieving rates proportional to $\log(SNR)$ to $\log(1+SNR)$. However, we  initially forgo the ``1+'' term for simplicity and so as not to clutter the main idea with additional dithers and MMSE scaling.}. 

We assume that $P_1 = N^2p^2$ and $P_3 = p^2$ where $p \in \mathbb{R}$ %\nrd{why small $p$? I would probably use $p$ or better $q$ as my prime for the finite field (this is sort of convention I think for finite fields) and $P$ for power -- now you're doing the reverse which is confusing.} 
and $N \in \mathbb{Z}$. %\nrd{NOTATION MUST BE CHANGED -- THIS $p$ is very confusing with $\mathbb{F}_{P_{prime}}$, powers $P_i$ and prime $P_{prime}$. Need a better solution.}
This assumption will be generalized to arbitrary power constraints in the next section.  
We focus on the symmetric rate for the Two-way Two-relay Channel, i.e. when the coding rates of the two messages are identical. 
% we only need to show the symmetric rates case here to demonstrate our idea. 

%\nrd{Don't we want a Theorem statement for this section? Saying "these rates can be achieved in BC phase of two-way relay channel where relays have certain side information? Need something that we can apply over and over. Now it's just a long rambling scheme without a succinct statement of the ``result''. I HAVE THOUGHT ABOUT THIS. BUT IT IS HARD FOR ME TO GIVE A THEOREM. THE POINT HERE IS TO GIVE AN INTUITION. I AM NOT APPLYING THIS RESULT TO NEXT SECTION.}

{\it Codebook generation:} Consider the messages  $w_a, w_b \in \mathbb{F}_{P_{prime}} = \{0,1,2, \dots, P_{prime} - 1\}$. $P_{prime}$ is a large prime number such that $P_{prime} = [2^{nR_{sym}}]$, where $R_{sym}$ is the symmetric coding rate and $[\;]$ denotes rounding to the nearest prime ($P_{prime} = [2^{nR_{sym}}] \rightarrow \infty$ as $n \rightarrow \infty$ since there are infinitely many primes). The two users Node $1$ and $2$ send the codewords ${\bf X}_1 = Np {\bf t}_a = Np \phi(w_a)$ and ${\bf X}_2 = p {\bf t}_b = p\phi(w_b)$ where $\phi(\cdot)$ is defined in \eqref{eq:phi} in Section \ref{subsec:nested} with the nested lattices $\Lambda\subseteq \Lambda_c$. Notice that their codebooks are scaled versions of the codebook $\mathcal{C}_{\Lambda_c, {\cal V}}$. The symmetric coding rate is then $R_{sym} := \frac{1}{n} \log \frac {V(\Lambda)}{V(\Lambda_c)}$.

In the MAC phase, the relay receives ${\bf Y}_2 = {\bf X}_1 + {\bf X}_3 + {\bf Z}_2$ and decodes $(Np{\bf t}_a + p{\bf t}_b) \mod Np\Lambda$ with arbitrarily low probability of error as $n \rightarrow \infty$ %from 
%\begin{align*}
%{\bf Y} \mod \Lambda &= ( {\bf Y}_R = {\bf X}_1 + {\bf X}_2 + {\bf Z}_R ) \mod \Lambda \\
%&= ( Np{\bf t}_a + p{\bf t}_b + {\bf Z}_R) \mod Np\Lambda
%\end{align*}
with rate constraints 
\begin{align*}
R_{sym} &< \left[ \frac{1}{2} \log \left( \frac{P_1}{N_2}\right) \right]^+\\
R_{sym} &< \left[ \frac{1}{2} \log \left( \frac{P_3}{N_2} \right) \right]^+
\end{align*}
according to Lemma \ref{lem:decodesum}. 

%\nrd{How about moving Lemma \ref{lem:decodesum} to the "lemmas" section in previous section? Would belong more, and then can just go and describe whole scheme here without having to put in an extra lemma here.}
\begin{lemma}
\label{lem:decodesum} %\nrd{I like the lemma, but move it? If move need to make sure to define what rate $R$ refers to.}
Let ${\bf X}_a = \alpha \theta {\bf t}_a = \alpha \theta \phi(w_a) \in \{ \alpha \theta \Lambda_{c} \cap \mathcal{V}(\alpha \theta \Lambda) \}$ and ${\bf X}_b = \theta {\bf t}_b =  \theta \phi(w_b) \in \{ \theta \Lambda_{c} \cap \mathcal{V} (\theta \Lambda) \}$ where $w_a, w_b \in \mathbb{F}_{P_{prime}}$, $\alpha \in \mathbb{Z}^+$, $\theta \in \mathbb{R}^+$ and $\phi(\cdot), \Lambda\subseteq \Lambda_c$ are defined as in Section \ref{subsec:nested} with $R := \frac{1}{n} \log \frac{V(\Lambda)}{V(\Lambda_c)}$.
%\nrd{what values can $\alpha$ and $\theta$ take? are $\Lambda$ and $\Lambda_c$ nested and of particular second moments or Poltyrev / Rogers etc.? State all what need in Lemma so it's independent.}
%where $\Lambda_{a} \subseteq \Lambda_{ca}$ and $\Lambda_{b} \subseteq \Lambda_{cb}$ are two good nested lattice pairs (WLOG, assume $\sigma^2(\Lambda_a) \geq \sigma^2 (\Lambda_b)$), $\Lambda_{ca} \subseteq \Lambda_{cb}$ or $\Lambda_{cb} \subseteq \Lambda_{ca}$. The coding rates $R_a = \frac{1}{n} \log \frac{V_{a}}{V_{ca}}$ and $R_b = \frac{1}{n} \log \frac{V_{b}}{V_{cb}}$ . 
From the received signal ${\bf Y} = {\bf X}_a + {\bf X}_b + {\bf Z}$ where ${\bf Z} \sim \mathcal{N} ({\bf 0}, \sigma^2_z{\bf I})$ one may  decode $(\alpha \theta {\bf t}_a + \theta {\bf t}_b) \mod \alpha \theta \Lambda$ with arbitrary low probability of error as $n \rightarrow \infty$ at rates
\begin{align*}
R&< \left[ \frac{1}{2} \log \frac{\sigma^2(\alpha \theta \Lambda)}{\sigma^2_z} \right]^+\\
R&< \left[ \frac{1}{2} \log \frac{\sigma^2(\theta \Lambda)}{\sigma^2_z}\right]^+.
\end{align*}
\end{lemma}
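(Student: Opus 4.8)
The plan is to reduce the claim to the standard lattice-decoding guarantee for the AWGN channel by exhibiting $( \alpha \theta {\bf t}_a + \theta {\bf t}_b ) \mod \alpha \theta \Lambda$ as a single codeword of one nested lattice pair. The structural point is that, since $\alpha \in \mathbb{Z}^+$, we have $\alpha \Lambda \subseteq \Lambda$ and $\alpha \Lambda_c \subseteq \Lambda_c$, hence
\[
\alpha \theta \Lambda \;\subseteq\; \alpha \theta \Lambda_c \;\subseteq\; \theta \Lambda_c ,
\]
so $(\alpha \theta \Lambda,\ \theta \Lambda_c)$ is a valid nested (coarse, fine) lattice pair. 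Moreover ${\bf X}_a = \alpha \theta {\bf t}_a \in \alpha \theta \Lambda_c \subseteq \theta \Lambda_c$ and ${\bf X}_b = \theta {\bf t}_b \in \theta \Lambda_c$, so the noiseless sum ${\bf X}_a + {\bf X}_b$ is a point of the fine lattice $\theta \Lambda_c$, and therefore $({\bf X}_a + {\bf X}_b) \mod \alpha \theta \Lambda$ lies in the codebook $\{ \theta \Lambda_c \cap \mathcal{V}(\alpha \theta \Lambda) \}$ of this pair, whose coding rate is $\frac{1}{n} \log \frac{V(\alpha \theta \Lambda)}{V(\theta \Lambda_c)} = \frac{1}{n} \log \frac{\alpha^n V(\Lambda)}{V(\Lambda_c)} = \log \alpha + R$.

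Next I would decode by quantizing ${\bf Y}$ to its nearest point in $\theta \Lambda_c$ and then reducing modulo $\alpha \theta \Lambda$. Scaling a lattice by a positive constant multiplies its volume and its per-dimension second moment by equal powers of that constant, and likewise rescales the Poltyrev limit, so $\alpha \theta \Lambda$ inherits both Rogers- and Poltyrev-goodness from $\Lambda$ and $\theta \Lambda_c$ inherits Poltyrev-goodness from $\Lambda_c$. Since ${\bf Y} = ({\bf X}_a + {\bf X}_b) + {\bf Z}$ with ${\bf X}_a + {\bf X}_b \in \theta \Lambda_c$, the nearest-point quantizer returns ${\bf X}_a + {\bf X}_b$ unless ${\bf Z} \notin \mathcal{V}(\theta \Lambda_c)$; by the AWGN-goodness of $\theta \Lambda_c$ this event has probability tending to $0$ as $n \to \infty$ whenever $\sigma_z^2 < V(\theta \Lambda_c)^{2/n}/(2 \pi e)$ (this is the no-dither, no-MMSE specialization of the lattice decoding argument of \cite{Erez:2004, Song:lattice}), and by translation symmetry of the lattice this error probability does not depend on the transmitted codeword, so no assumption on the distribution of $w_a, w_b$ is needed. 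On this event the subsequent reduction recovers $({\bf X}_a + {\bf X}_b) \mod \alpha \theta \Lambda$ exactly.

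It remains to turn the condition $V(\theta \Lambda_c)^{2/n} > 2 \pi e\, \sigma_z^2$ into the stated rate bounds. Invoking Rogers-goodness of $\alpha \theta \Lambda$, i.e. $V(\alpha \theta \Lambda)^{2/n} = 2 \pi e\, \sigma^2(\alpha \theta \Lambda)\,( 1 + o(1) )$, the coding rate of the sum codebook obeys
\[
\log \alpha + R = \tfrac12 \log \frac{V(\alpha \theta \Lambda)^{2/n}}{V(\theta \Lambda_c)^{2/n}} < \tfrac12 \log \frac{\sigma^2(\alpha \theta \Lambda)}{\sigma_z^2}
\]
in the limit, the $o(1)$ slack being absorbed in the usual way. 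Because $\sigma^2(\alpha \theta \Lambda) = \alpha^2 \sigma^2(\theta \Lambda)$, the $\log \alpha$ terms cancel and this is precisely $R < \tfrac12 \log \frac{\sigma^2(\theta \Lambda)}{\sigma_z^2}$; since $\alpha \geq 1$, this also implies $R < \tfrac12 \log \frac{\sigma^2(\alpha \theta \Lambda)}{\sigma_z^2}$, so both displayed inequalities hold (the $[\,\cdot\,]^+$ operators merely record that the claim is vacuous when a right-hand side is nonpositive). I expect the only real care is this bookkeeping: passing to the pair $(\alpha \theta \Lambda, \theta \Lambda_c)$ inflates the coding rate by $\log \alpha$, but the $\alpha^2$ factor inside $\sigma^2(\alpha \theta \Lambda)$ cancels it exactly, so one must keep the scalings $\theta$ and $\alpha \theta$ straight (and check that goodness survives them) — beyond that the argument is the routine lattice-decoding estimate.
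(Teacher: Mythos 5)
Your proof is correct and follows essentially the same route as the paper, which simply defers to the scheme of Nam et al.\ with dithers and MMSE scaling stripped out: reduce modulo $\alpha\theta\Lambda$, lattice-decode the sum in the fine lattice, and read the rate constraint off the Rogers/Poltyrev goodness of the scaled pair. Your explicit bookkeeping — identifying the sum codebook as the nested pair $(\alpha\theta\Lambda,\theta\Lambda_c)$ of rate $\log\alpha+R$ and noting that the $\alpha^2$ in $\sigma^2(\alpha\theta\Lambda)$ cancels the $\log\alpha$, so the second constraint implies the first — is a clean, self-contained filling-in of what the paper delegates to the citation.
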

\begin{proof}
The proof generally follows \cite{Nam:IEEE}, except that for simplicity,  we do not use dithers or MMSE scaling of the received signal in our scheme. The receiver processes the received signal as
\begin{align*}
{\bf Y} \mod \alpha \theta \Lambda &= {\bf X}_a + {\bf X}_b + {\bf Z} \mod \alpha \theta \Lambda\\
&= \alpha \theta {\bf t}_a + \theta {\bf t}_b + {\bf Z} \mod \alpha \theta \Lambda
\end{align*}
To decode $(\alpha \theta {\bf t}_a + \theta {\bf t}_b) \mod \alpha \theta \Lambda$, the effective noise is given by ${\bf Z}$ with variance $\sigma^2_z$ rather than the equivalent noise after MMSE scaling as in  \cite{Nam:IEEE}. All other steps remain identical. The effective signal-to-noise ratios are $SNR_a =  \frac{\sigma^2(\alpha \theta \Lambda_a)}{\sigma^2_z}$ and $SNR_b =  \frac{\sigma^2(\theta \Lambda_b)}{\sigma^2_z}$, resulting in the given rate constraints.
\end{proof}

In the BC phase, if, mimicking the steps of \cite{Narayanan:2010} the relay simply broadcasts the scaled version of $( Np{\bf t}_a + p{\bf t}_b) \mod Np\Lambda$  
\[ \frac{\sqrt{P_2}}{Np} ( \left( Np{\bf t}_a + p{\bf t}_b\right) \mod Np\Lambda )  =  \left(\sqrt{P_2} {\bf t}_a + \frac{\sqrt{P_2}}{N} {\bf t}_b \right) \mod \sqrt{P_2} \Lambda,\] %The actual SNR for the direction $R \rightarrow 2$ is $\frac{P_R}{N_2}$, but for the direction $R \rightarrow 1$, the actual SNR is $\frac{P_R}{N N_1}$ \nrd{hmm, I'd like one more sentence to show this "actual SNR" which is a vague term. I think you should write it in terms of the rates that would be achievable if we did this, rather than the "actual SNR"?}. 
we would achieve the rate $R_{sym} < [ \frac{1}{2} \log \frac{P_2}{N_3} ]^+$ for the direction $2 \rightarrow 3$  and the rate  $R_{sym} < [\frac{1}{2} \log \frac{P_2}{N N_1}]^+$ for the $1 \leftarrow 2$ direction. While the rate constraint for the direction $2\rightarrow 3$ is as large as expected,  the rate constraint for the direction $1 \leftarrow 2$ does not fully utilize the power at the relay, i.e. the codeword ${\bf t}_b$ appears to use only the power $P_2/N$ rather than the full power $P_2$. One would thus want to somehow transform the decoded sum $(Np{\bf t}_a + p{\bf t}_b) \mod Np\Lambda$ such that both ${\bf t}_a$ and ${\bf t}_b$ of the transformed signal would be uniformly distributed over $\mathcal{V} ( \sqrt{P_2} \Lambda)$.
%{\it change. In order to improve the rate constraint for the direction $2 \rightarrow 1$  to $R_{sym} < [\frac{1}{2} \log \frac{P_2}{N_1}]^+$ , 
%we need to expand ${\bf t}_b$ so that it can be uniformly distributed over $\mathcal{V} ( \sqrt{P_2} \Lambda)$ while not to  destroy the structure of  ${\bf t}_a$. }
Notice that the relay can only operate on $(Np{\bf t}_a + p{\bf t}_b) \mod Np\Lambda$ rather than $Np{\bf t}_a$ and $p{\bf t}_b$   individually. 

{\bf Re-distribution Transform:} To alleviate this problem we propose the following ``Re-distribution Transform'' operation which consists of three steps: 
\begin{enumerate}
\item multiply the decoded signal by $N$ to obtain $ N ( ( Np{\bf t}_a + p{\bf t}_b) \mod Np\Lambda )$, % \mod \Lambda$,
\item then take $\mod \Lambda$ to obtain
\[ N ( ( Np{\bf t}_a + p{\bf t}_b) \mod Np\Lambda ) \mod \Lambda = (N^2p{\bf t}_a + Np{\bf t}_b) \mod Np\Lambda \]
 according to the operation rule in \eqref{eq:operation1}, and finally 
 \item re-scale the signal to be of second moment $P_2$ as 
\[ \frac{\sqrt{P_2}}{Np} ( (N^2p{\bf t}_a + Np{\bf t}_b) \mod Np\Lambda )  = (N\sqrt{P_2} {\bf t}_a + \sqrt{P_2} {\bf t}_b ) \mod \sqrt{P_2} \Lambda \]
according to the operation rule in \eqref{eq:operation2}. Notice that $(N\sqrt{P_2} {\bf t}_a + \sqrt{P_2} {\bf t}_b ) \mod \sqrt{P_2} \Lambda$ is uniformly distributed over $\{ \sqrt{P_2} \Lambda_c \cap \mathcal{V} (\sqrt{P_2} \Lambda) \}$  by Lemma \ref{lem:codingrate}.
\end{enumerate}

The three steps of the Re-distribution Transform procedure are illustrated in Figure \ref{fig:illustrate}  for a simple one-dimensional lattice in order to gain some intuition and insight (though we operate in $n$-dimensions in our achievability proof). 
In this simple example,  Node $1$ sends $2pt_a \in \{2p\Lambda_c \cap \mathcal{V}(2p\Lambda) \} = \{-2, -1, 0 ,1,2 \}$ and Node 3 sends $pt_b \in \{p\Lambda_c \cap \mathcal{V}(p\Lambda) \} = \{ -1, -1/2, 0 , 1/2, 1\}$.  The relay decodes \[ (2pt_a + pt_b) \mod 2p\Lambda \in \{p\Lambda_c \cap \mathcal{V}(2p\Lambda) \} = \{-2, -3/2, -1, -1/2, 0 , 1/2, 1, 3/2, 2, 5/2 \} \]
and transforms it into $(4pt_a + 2pt_b) \mod 2p\Lambda \in \{2p\Lambda_c \cap \mathcal{V}(2p\Lambda) \} = \{-2, -1, 0 ,1 ,2\} $ according to the three steps in the Re-distribution Transform.  Notice $|\{2p\Lambda_c \cap \mathcal{V}(2p\Lambda) \} | < | \{p\Lambda_c \cap \mathcal{V}(2p\Lambda) \} |$, and hence the transformed signal is thus easier to forward. 
\begin{figure}
%\begin{figure}
\centering
\includegraphics[width=15cm]{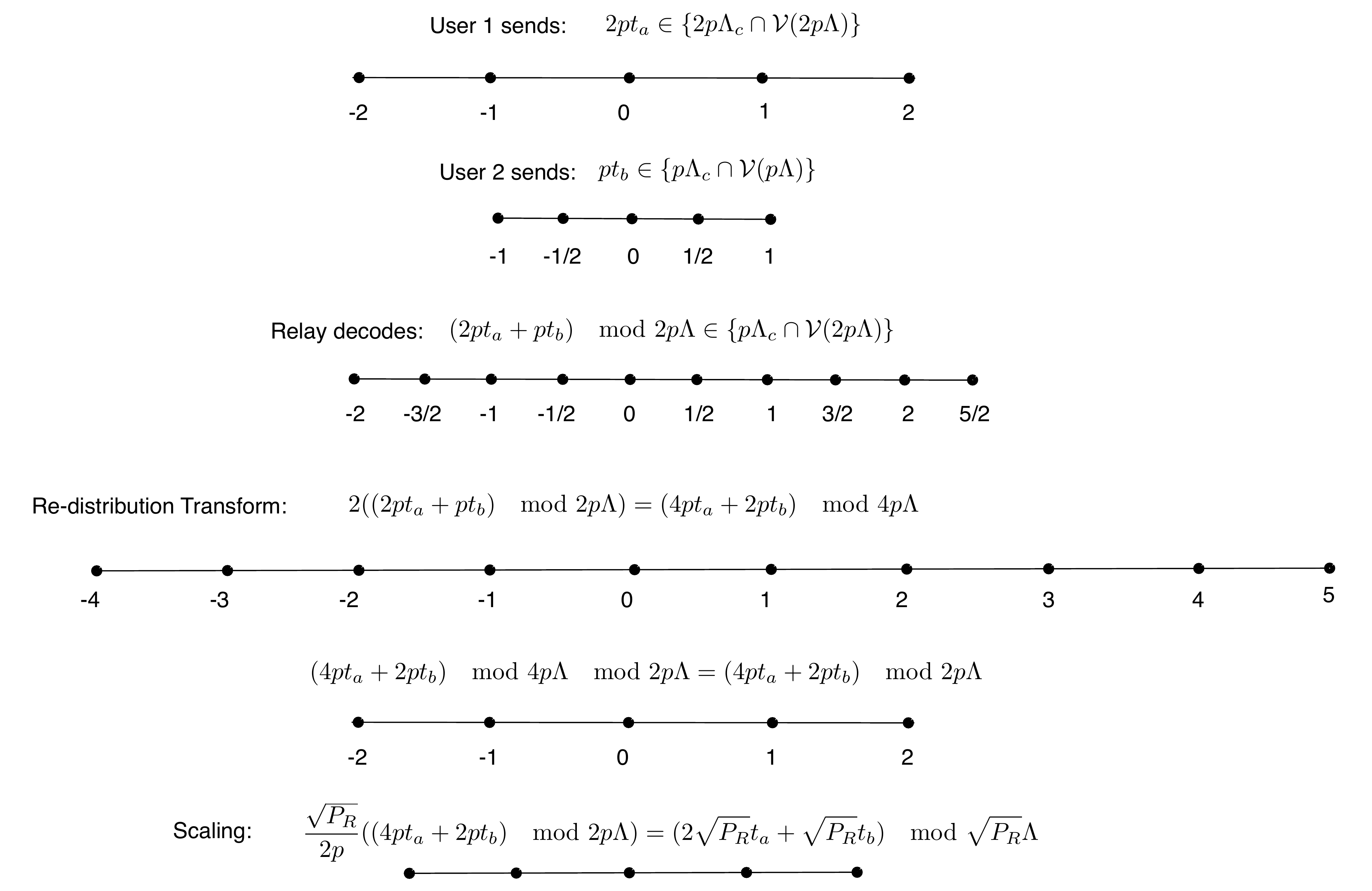}
\caption{Re-distribution Transform illustration for a  one-dimensional lattice.}
\label{fig:illustrate}
\end{figure}

The relay broadcasts ${\bf X}_2  = (N \sqrt{P_2} {\bf t}_a + \sqrt{P_2} {\bf t}_b ) \mod \sqrt{P_2} \Lambda$. Notice that $(N \sqrt{P_2} {\bf t}_a + \sqrt{P_2} {\bf t}_b ) \mod \sqrt{P_2} \Lambda $ is uniformly distributed over $\{ \sqrt{P_2} \Lambda_c \cap \mathcal{V} (\sqrt{P_2} \Lambda) \}$, and so its coding rate is $R_{sym}$. Node $1$ and Node $3$ receive ${\bf Y}_1 = {\bf X}_2 + {\bf Z}_1$ and ${\bf Y}_3 = {\bf X}_2 + {\bf Z}_3$ respectively and, according to Lemma \ref{lem:ptop}, may decode $(N \sqrt{P_2} {\bf t}_a + \sqrt{P_2} {\bf t}_b ) \mod \sqrt{P_2} \Lambda$ %from 
%\[ {\bf Y}_1 \mod P_R\Lambda = (NP_R {\bf t}_a + P_R {\bf t}_b) \mod P_R\Lambda +  {\bf Z}_1 \mod P_R\Lambda \]
at rate 
\begin{align*}
R_{sym}  &< \left[ \frac{1}{2} \log \frac{P_2}{N_1} \right]^+ \\
R_{sym}  &< \left [ \frac{1}{2} \log \frac{P_2}{N_3}\right]^+.
\end{align*}

%\nrd{I would also move the next lemma to the "technical lemmas" section -- this is like the "decode a unique codeword" theorem with equivalent SNR, and the previous lemma would be like the "decode a sum of codewords" with equivalent SNR lemmas. Then the flow of this section will be much better.}
\begin{lemma}
\label{lem:ptop}
 Let ${\bf X} = \theta {\bf t}  = \theta \phi(w)  \in \{ \theta \Lambda_{c} \cap \mathcal{V}(\theta \Lambda) \}$ where $\theta \in \mathbb{R}^+$, $w \in \mathbb{F}_{P_{prime}}$ and $\phi(\cdot), \Lambda \subseteq \Lambda_c$ are defined as in Section \ref{subsec:nested}, with $R = \frac{1}{n} \log \frac{V(\Lambda)}{V(\Lambda_c)}$. %where $\Lambda \subseteq \Lambda_c$ is a good nested lattice pair. The coding rate $R = \frac{1}{n} \log \frac{V}{V_c}$. 
From the received signal ${\bf Y} = {\bf X} + {\bf Z}$ where ${\bf Z} \sim \mathcal{N} ({\bf 0}, \sigma^2_z{\bf I})$, one may decode $\theta {\bf t}$ with arbitrary low probability of error as $n \rightarrow \infty$ at rate 
\[ R < \frac{1}{2} \log \frac{\sigma^2(\theta \Lambda)}{\sigma^2_z}. \]
\end{lemma}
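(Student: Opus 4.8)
The plan is to show that plain nearest-codeword lattice decoding at the receiver succeeds with vanishing error probability precisely under the stated rate constraint, using only two facts guaranteed by the construction in Section~\ref{subsec:nested}: that $\Lambda_c$ is Poltyrev (AWGN) good and that $\Lambda$ is Rogers good. As in Lemma~\ref{lem:decodesum}, since we use neither dithers nor MMSE scaling, what comes out is the $\frac{1}{2}\log(\mathrm{SNR})$ form rather than the $\frac{1}{2}\log(1+\mathrm{SNR})$ form; recovering the extra ``$1+$'' is standard but deliberately not done here.

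First I would fix the decoder: on receiving ${\bf Y}=\theta{\bf t}+{\bf Z}$, quantize to the scaled fine lattice and then reduce into the fundamental cell of the scaled coarse lattice, i.e.\ form $\hat{\bf X}=Q_{\theta\Lambda_c}({\bf Y}) \mod \theta\Lambda$ (here $Q_{\theta\Lambda_c}(\cdot)$ denotes the nearest point of $\theta\Lambda_c$) and output $\hat w=\phi^{-1}(\theta^{-1}\hat{\bf X})$, which is well defined since $\hat{\bf X}\in\theta\Lambda_c\cap\mathcal V(\theta\Lambda)$. Since $\theta{\bf t}\in\theta\Lambda_c$, the Voronoi cell of $\theta\Lambda_c$ about the transmitted point is $\theta{\bf t}+\mathcal V(\theta\Lambda_c)$, so on the event $\{{\bf Z}\in\mathcal V(\theta\Lambda_c)\}$ one has $Q_{\theta\Lambda_c}({\bf Y})=\theta{\bf t}$; and because $\theta{\bf t}\in\mathcal V(\theta\Lambda)$, the subsequent $\mod\theta\Lambda$ leaves it unchanged, giving $\hat{\bf X}=\theta{\bf t}$ and $\hat w=w$. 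Hence the decoding error probability is bounded, uniformly over the transmitted codeword, by $\Pr[{\bf Z}\notin\mathcal V(\theta\Lambda_c)]$.

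Next I would drive this bound to zero and convert the condition into a rate condition. Scaling a Poltyrev-good lattice by a positive constant preserves the property, so by the AWGN-goodness of $\Lambda_c$ (the result quoted via \cite{Song:lattice, Erez:2004}) we get $\Pr[{\bf Z}\notin\mathcal V(\theta\Lambda_c)]\to 0$ as $n\to\infty$ whenever the volume-to-noise ratio exceeds $2\pi e$, i.e.\ whenever $\sigma_z^2<V(\theta\Lambda_c)^{2/n}/(2\pi e)$. Now $V(\theta\Lambda_c)=\theta^n V(\Lambda_c)$ and, from $R=\frac{1}{n}\log\frac{V(\Lambda)}{V(\Lambda_c)}$, $V(\Lambda_c)^{2/n}=V(\Lambda)^{2/n}2^{-2R}$, so $V(\theta\Lambda_c)^{2/n}=\theta^2V(\Lambda)^{2/n}2^{-2R}$. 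Since $\Lambda$ is Rogers good its normalized second moment tends to $1/(2\pi e)$, i.e.\ $V(\Lambda)^{2/n}/\sigma^2(\Lambda)\to 2\pi e$; combined with $\sigma^2(\theta\Lambda)=\theta^2\sigma^2(\Lambda)$ this yields $V(\theta\Lambda_c)^{2/n}/(2\pi e)\to\sigma^2(\theta\Lambda)2^{-2R}$. Therefore, for all large $n$, the sufficient condition becomes $\sigma_z^2<\sigma^2(\theta\Lambda)2^{-2R}$, which rearranges to $R<\frac{1}{2}\log\frac{\sigma^2(\theta\Lambda)}{\sigma_z^2}$ --- exactly the claim.

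The hard part will not really be hard: this lemma is essentially a repackaging of Poltyrev's lattice coding theorem together with the Rogers-good shaping bound, so the only care required is bookkeeping --- checking that the $\mod\theta\Lambda$ step is harmless on the good event (immediate from $\theta{\bf t}\in\mathcal V(\theta\Lambda)$) and correctly carrying the constants $2\pi e$ and the factor $\theta^2$ through the relations among $\sigma^2(\theta\Lambda)$, $V(\theta\Lambda)$ and $V(\theta\Lambda_c)$ in the $n\to\infty$ limit. No random-coding or codebook-averaging argument is needed, since a single fixed good nested pair $\Lambda\subseteq\Lambda_c$ does the job.
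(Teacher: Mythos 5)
Your proof is correct and takes essentially the same route as the paper: both decode by lattice quantization without dither or MMSE scaling and invoke the Poltyrev/Rogers goodness of the nested pair (the paper simply delegates the error-exponent and volume-to-noise-ratio bookkeeping to \cite[Theorem 5]{Erez:2004}, whereas you spell it out explicitly). The only difference is presentational: the paper phrases the receiver processing as a $\mathrm{mod}$-$\Lambda$ reduction before citing \cite{Erez:2004}, which is equivalent to your $Q_{\theta\Lambda_c}(\cdot) \bmod \theta\Lambda$ decoder on the error event you analyze.
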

\begin{proof}
The proof generally follows \cite [Theorem 5] {Erez:2004},  except that we do not use dithers nor MMSE scaling in our scheme.  The receiver processes the received signal as
\begin{align*}
{\bf Y} \mod \Lambda &= {\bf X} + {\bf Z} \mod \Lambda\\
&= {\bf t} + {\bf Z} \mod \Lambda
\end{align*}
To decode ${\bf t}$,  the effective noise is ${\bf Z}$ with variance $\sigma^2_z$ rather than an equivalent noise after MMSE as in \cite [Theorem 5] {Erez:2004}. All other steps are identical. The effective signal-to-noise ratio is thus $SNR =  \frac{\sigma^2(\theta \Lambda)}{\sigma^2_z}$, and we obtain the rate constraints as in the lemma statement.
\end{proof}

Nodes $1$ and $2$ then map the decoded $(NP_R {\bf t}_a + P_R {\bf t}_b ) \mod P_R\Lambda$ to $Nw_a \oplus w_b$ by Lemma \ref{lem:mapping}. With side information $w_a$, Node $1$ may then determine $w_b$; likewise with side information $w_b$, Node $2$ can obtain $Nw_a$ and then determine $w_a$ by Lemma \ref{lem:prime}. %\nrd{cam either have image of flow or lemma?}

%Node $2$ receives ${\bf Y}_2 = {\bf X}_R + {\bf Z}_2$ can decodes 
%{\it Codebook construction:} WLOG, we assume the rate of Node $1$ to Node $2$ is great than that of Node $2$ to Node $1$, i.e. $R_a \geq R_b$. The messages ${\bf w}_a$ and ${\bf w}_b$ are defined as ${\bf w}_a \in \mathbb{F}_{P_{prime}}^{k_a}$ and ${\bf w}_b \in \mathbb{F}_{P_{prime}}^{k_b}$ ($k_a \geq k_b$), where $\mathbb{F} = \{0,1,2,\dots, P_{prime}-1\}$. $P_{prime}$ is a large prime number such that $P_{prime}^{k_a} = [2^{nR_a} ]$ and $P_{prime}^{k_b} = [2^{nR_b} ]$.  We consider the lattice chain $\Lambda \subseteq \Lambda_{cb}

\section{Two-way Two-relay Channel}
\label{sec:procedure}

We first consider the full-duplex Two-way Two-relay channel where every node transmits and receives at the same time.  For this channel model we first obtain an achievable rate region for special relationships between the power constraints at the four nodes in Theorem \ref{thm:rate}, and the related Lemma \ref{lem:shift}. We then use these results to obtain an achievable rate region for the general (arbitrary powers) Gaussian Two-way Two-relay channel, which we show is to within $\frac{1}{2}\log(3)$ bits/s/Hz per user of the symmetric rate capacity in Theorem \ref{thm:general}. 
%We assume that thee transmitting and receiving do not interfere with each other. 
\theorem
\label{thm:rate}
For the channel model described in Section \ref{sec:model}, if $P_1 = p^2$, $P_2 = M^2q^2$, $P_3 = N^2p^2$ and $P_4 = q^2$, where $p, q \in \mathbb{R}^+$ and $M, N \in \mathbb{Z}^+$ the following rate region 
\begin{align}
R_a, R_b < \min & \left( \left[ \frac{1}{2} \log \left(\frac{P_1}{N_2}\right) \right]^+, \left[ \frac{1}{2} \log \left(\frac{P_2}{N_3}\right) \right]^+, \left[ \frac{1}{2} \log \left(\frac{P_3}{N_4}\right)\right]^+,  \right. \nonumber \\
& \quad \left. \left[ \frac{1}{2} \log \left(\frac{P_4}{N_3}\right) \right]^+, \left[ \frac{1}{2} \log \left(\frac{P_3}{N_2}\right)\right]^+, \left[ \frac{1}{2} \log \left(\frac{P_2}{N_1}\right) \right]^+ \right) \label{eq:rateconstraints} \end{align}
is achievable using lattice codes. 
\begin{proof} \\
\begin{figure}
\centering
\includegraphics[width=15cm]{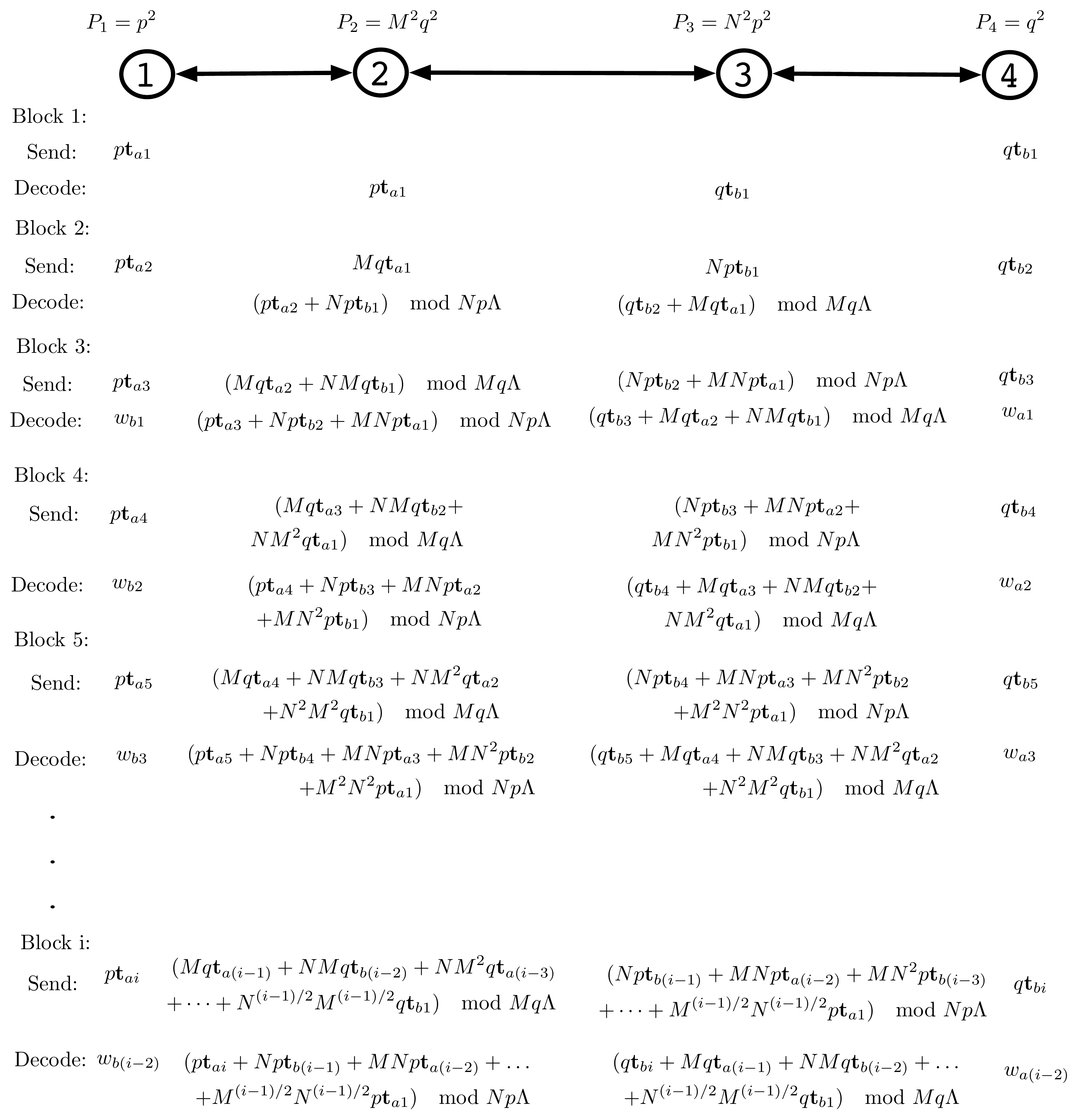}
\caption{Multi-phase Block Markov achievability strategy for Theorem \ref{thm:rate}.}
\label{fig:diagram}
\end{figure}
{\it Codebook generation}: We consider the good nested lattice pair $\Lambda \subseteq \Lambda_c$ with corresponding codebook $\mathcal{C}_{\Lambda_c, {\cal V}} = \{ \Lambda_c \cap \mathcal{V} (\Lambda) \}$, and two messages $w_a, w_b \in \mathbb{F}_{P_{prime}} = \{0,1,2, \dots, P_{prime}-1\}$ in which $P_{prime}$ is a large prime number such that $P_{prime} = [ 2^{nR_{sym}} ]$ ($R_{sym}$ is the coding rate). The codewords associated with the messages $w_a$ and $w_b$ are ${\bf t}_a = \phi(w_a)$ and ${\bf t}_b = \phi(w_b)$, where the mapping $\phi(\cdot)$ from $\mathbb{F}_{P_{prime}}$ to $\mathcal{C}_{\Lambda_c, {\cal V}} \in \mathbb{R}^n$ is defined in \eqref{eq:phi} in  Section \ref{subsec:nested}.
%Suppose the coarse the lattice $\Lambda = {\bf B} Z^n$ is both Rogers good and Poltyrev good with second moment $\sigma^2(\Lambda) =1$. The existence of such a lattice is proved in \cite{Erez:2004}. With respect to this coarse lattice $\Lambda$, the fine lattice $\Lambda_c$ is generated by Construction A \cite{}, which maps finite field codebook into real lattice points. The generation procedure is as:
%\begin{itemize}
%\item Generate a finite field generator vector $ {\bf G} \in \mathbb{F}_{P_{prime}}^n$ with every element drawn in an i.i.d. uniform distribution from the finite field $\mathbb{F}_{P_{prime}} = \{ 0,1,2, %\dots, P_{prime}-1\} $. 
%\item The codebook $\bar{\mathcal{C}}$ induced by {\bf G} is $\bar{\mathcal{C}} = \{ \bar{c} = {\bf G}w: w \in F_p \}$.
%\item Embed this codebook into the unit cube by scaling down by a factor of p and then place a copy at every integer vector: $\bar{\Lambda}_c = P_{prime}^{-1} \bar{\mathcal{C}} + \mathbb{Z}^n$.
%\item Rotate $\bar{\Lambda}_c$ by the generator matrix of the coarse lattice to get the desired fine lattice: $\Lambda_c= {\bf B} \bar{\Lambda}_c$.
%\end{itemize}
%Then $\mathcal{C} = \{ \Lambda_c \cap \mathcal{V} (\Lambda)\}$.
%Notice that $w_a, w_b$ and $w$ are from the same finite field $\mathbb{F}_{P_{prime}}$, so there are one-to-one mappings $\phi()$ between $w_a, w_b \in \mathbb{F}_{P_{prime}} = \{0,1,2,\dots, P_{prime} -1\}$ and ${\bf t}_a, {\bf t}_b \in \mathcal{C} = \{ \Lambda_c \cap \mathcal{V} (\Lambda) \}$: ${\bf t}_{a(b)} = \phi(w_{a(b)})$ and $w_{a(b)} = \phi^{-1} ({\bf t}_{a(b)})$. \\

{\it Encoding and decoding steps:} We use a Block Markov Encoding/Decoding scheme where Node $1$ and $4$ transmit a new message $w_{ai}$ and $w_{bi}$, respectively,  at the beginning of block $i$. To satisfy the transmit power constraints,  Node 1 and 4 send the scaled codewords ${\bf X}_{1i} = p{\bf t}_{ai} = p \phi(w_{ai}) \in \{p\Lambda_c \cap \mathcal{V} (p\Lambda) \}$ and ${\bf X}_{4i} = q{\bf t}_{bi} = q\phi(w_{bi}) \in \{q\Lambda_c \cap \mathcal{V} (q\Lambda) \}$ respectively in block $i$. Node $2$ and $3$ send ${\bf X}_{2i}$ and ${\bf X}_{3i}$, and Node $j$ ($j =\{1,2,3,4\}$) receives ${\bf Y}_{ji}$ in block $i$. The procedure of the first few blocks (the initialization steps) are described and then a generalization is made. %\nrd{may need to be more precise about block markov and say what ${\bf t}_{ai}$ and ${\bf t}_{bi}$ correspond to as undefined below}\\
We note that in general the coding rates $R_a$ for $w_a$ and $R_b$ for $w_b$ may be different, as long as $R_{sym} =  \max (R_a, R_b)$, since we may always send dummy messages to make the two coding rates equal. 

Block $1$:  Node $1$ and $4$ send new codewords ${\bf X}_{11} = p{\bf t}_{a1}$ and ${\bf X}_{41} = q{\bf t}_{b1}$ to Node $2$ and $3$ respectively. Node $2$ and $3$ can decode the transmitted codeword with vanishing probability of error if 
\begin{align}
R_{sym} &< \left[ \frac{1}{2} \log \left(\frac{P_1}{N_2}\right) \label{eq:12} \right]^+ \\
R_{sym} &< \left[ \frac{1}{2} \log \left(\frac{P_4}{N_3}\right) \label{eq:43} \right]^+
\end{align}
according to Lemma \ref{lem:ptop}. %\nrd{Here use $R$, in codebook generation use $R$ but in theorem statement use $R_a$ and $R_b$ -- can we do different rates or is this proof only for symmetric rates? Need to fix to make consistent.}

Block $2$:  Node $1$ and $4$ send their respective new codewords ${\bf X}_{12} = p{\bf t}_{a2}$ and ${\bf X}_{42}= q {\bf t}_{b2}$, while Node $2$ and $3$ broadcast ${\bf X}_{22} = Mq{\bf t}_{a1}$ and ${\bf X}_{32} = Np{\bf t}_{b1}$ received in the  last block. Note they are scaled to fully utilize the transmit power $P_2 = M^2q^2$ and $P_3 = N^2p^2$. Node $2$ receives  ${\bf Y}_{22} = {\bf X}_{12} + {\bf X}_{32} + {\bf Z}_{22}$ and decodes $( p{\bf t}_{a2} + Np{\bf t}_{b1} ) \mod Np\Lambda$ with arbitrarily low probability of error if $R$ satisfies \eqref{eq:12} and 
\begin{equation}
R_{sym} <\left[ \frac{1}{2} \log \left(\frac{P_3}{N_2}\right) \right]^+ .\label{eq:32} 
\end{equation}
Similarly, Node $3$ can decode $(q {\bf t}_{b2}+Mq {\bf t}_{a1} ) \mod Mq\Lambda$ subject to \eqref{eq:43} and 
\begin{equation}
R_{sym} < \left[ \frac{1}{2} \log \left(\frac{P_2}{N_3}\right) \right]^+ . \label{eq:23} 
\end{equation}

Block $3$: 
\begin{itemize}
\item {\it Encoding:} Node $1$ and $4$ send new codewords as in the previous blocks.  Node $2$ further processes its decoded codewords combination according to the three steps of the Re-distribution Transform from previous block as 
\begin{align*}
(N  ( ( p{\bf t}_{a2}+ Np{\bf t}_{b1}) \mod Np\Lambda) ) \mod Np \Lambda &= (Np {\bf t}_{a2} + N^2p {\bf t}_{b1} ) \mod N^2p\Lambda \mod Np\Lambda \\
&= (Np{\bf t}_{a2} + N^2p {\bf t}_{b1} ) \mod Np\Lambda
\end{align*}
including scaling to fully utilize the transmit power $P_2 = M^2q^2$ as
\[ \frac{Mq}{Np} (Np {\bf t}_{a2}+ N^2p {\bf t}_{b1} ) \mod Np\Lambda = (Mq {\bf t}_{a2} + NMq {\bf t}_{b1}) \mod Mq\Lambda. \] It then broadcasts ${\bf X}_{23} = (Mq {\bf t}_{a2} + NMq {\bf t}_{b1}) \mod Mq\Lambda$. Notice that since
\[ (Mq {\bf t}_{a2} + NMq {\bf t}_{b1}) \mod Mq\Lambda \in \{ Mq\Lambda_c \cap \mathcal{V} (Mq\Lambda) \}\]
according to Lemma \ref{lem:codingrate}, its coding rate is $R_{sym}$. Similarly, Node 3 broadcasts ${\bf X}_{33} = (Np{\bf t}_{b2}+ MNp{\bf t}_{a1} ) \mod Np\Lambda$ again at  coding rate $R_{sym}$. 
\item {\it Decoding:} At the end of this block,  Node $2$ is able to decode $(p {\bf t}_{a3}+Np{\bf t}_{b2} + MNp {\bf t}_{a1} ) \mod Np\Lambda$ with rate constraints (\ref{eq:12}) and (\ref{eq:32}) according to Lemma \ref{lem:decodesum}, and  Node $3$ decodes $(q {\bf t}_{b3}+Mq{\bf t}_{a2} + NMq {\bf t}_{b1} ) \mod Mq\Lambda$ with constraints (\ref{eq:23}) and (\ref{eq:43}). Node $1$ decodes $(Mq {\bf t}_{a2} + NMq {\bf t}_{b1}) \mod Mq\Lambda$ sent by Node $2$ as in the point-to-point channel with rate constraint
\begin{equation}
R_{sym} < \left[ \frac{1}{2} \log \left(\frac{P_2}{N_1}\right) \right]^+\label{eq:21} 
\end{equation}
according to Lemma \ref{lem:ptop}. From the decoded $(Mq {\bf t}_{a2} + NMq {\bf t}_{b1}) \mod Mq\Lambda$, it obtains $w_{a2} \oplus Nw_{b1}$ (Lemma \ref{lem:mapping}). With its own information $w_{a2}$, Node $1$ can then obtain $N \otimes w_{b1} = w_{a2} \oplus Nw_{b1} \ominus w_{a2}$, which may be mapped to $w_{b1}$ since $P_{prime}$ is a prime number (Lemma \ref{lem:prime}). Notice $P_{prime} = [2^{nR_{sym}}] \rightarrow \infty$ as $ n \rightarrow \infty$, so  $N \ll P_{prime} = [2^{nR}]$ and $\frac{N}{P_{prime}} \notin  \mathbb{Z}$. Similarly, Node $4$ can decode $w_{a1}$ with rate constraint 
\begin{equation}
R_{sym} < \left[ \frac{1}{2} \log \left(\frac{P_2}{N_3}\right) \right]^+ . \label{eq:34} 
\end{equation}
\end{itemize}
Block $4$ and $5$ proceed in a similar manner,  as shown in Figure \ref{fig:diagram}. 

Block i: To generalize, in Block $i$ (assume $i$ is odd)\footnote{For $i$ even we have analogous steps with slightly different indices as may be extrapolated from the difference between Block 4 and 5 in Fig. \ref{fig:diagram}, which result in the same rate constraints.}, 
\begin{itemize}
\item {\it Encoding:} Node $1$ and $4$ send new messages ${\bf X}_{1i} = p{\bf t}_{ai}$ and ${\bf X}_{4i} = q{\bf t}_{bi}$ respectively. 
Node $2$ and $3$ broadcast 
\begin{footnotesize} \begin{align*}
{\bf X}_{2i} &= (Mq {\bf t}_{a(i-1)} + NMq {\bf t}_{b(i-2)} + NM^2q {\bf t}_{a(i-3)} + N^2M^2q {\bf t}_{b(i-4)}+ \dots + N^{(i-1)/2}M^{(i-1)/2} q {\bf t}_{b1} ) \mod Mq\Lambda \\
{\bf X}_{3i} &=  (Np{\bf t}_{b(i-1)} + MNp{\bf t}_{a(i-2)} + MN^2p {\bf t}_{b(i-3)} + M^2N^2p {\bf t}_{a(i-4)}+ \dots + M^{(i-1)/2}N^{(i-1)/2}p{\bf t}_{a1} ) \mod Np\Lambda.
\end{align*}
\end{footnotesize}
\item {\it Decoding:} Node $1$ decodes the codeword from Node $2$ with rate constraint (\ref{eq:21}) (Lemma \ref{lem:ptop}) and maps it to $w_{a(i-1)} \oplus Nw_{b(i-2)} \oplus NMw_{a(i-3)} \oplus N^2Mw_{b(i-4)} \oplus \dots \oplus N^{(i-1)/2}M^{(i-1)/2-1}w_{b1}$ (Lemma \ref{lem:mapping}).  %\nrd{confused-- the codeword node 2 send seems to have only multiple N not powers of N but the decoded message has powers of N? don't see the correspondence? also why is M gone in mapping?}. 
With its own messages $w_{ai}$ ($\forall i$) and the messages it decoded previously $\{w_{b1}, w_{b2}, \dots, w_{b(i-3)} \}$, Node $1$ can obtain $N \otimes w_{b(i-2)}$ and determine $w_{b(i-2)}$ accordingly (Lemma \ref{lem:prime}). Similarly, Node $4$ can decode $w_{a(i-2)}$ subject to rate constraint (\ref{eq:34}). %\nrd{What happened to $w_{b(i-3)}$? Seems only $w_{b(i-4)}$ and $w_{b(i-2)}$? Or is that because of off/even blocks?}

\item {\it Re-distribution Transform:} In this block $i$, Node $2$ also decodes \[ (p{\bf t}_{ai}+ Np {\bf t}_{b(i-1)}+MNp {\bf t}_{a(i-2)} + MN^2p {\bf t}_{b(i-3)} + M^2N^2p {\bf t}_{a(i-4)} + \dots +M^{(i-1)/2}N^{(i-1)/2}p {\bf t}_{a1} ) \mod Np\Lambda\]  from its received signal ${\bf Y}_{2i} = {\bf X}_{1i} + {\bf X}_{3i} + {\bf Z}_{2i}$ subject to rate constraints (\ref{eq:12}) and (\ref{eq:32}) (Lemma \ref{lem:decodesum}). It then uses the Re-distribution Transform to process the codeword combination as 
\begin{align*}
 &( N (p {\bf t}_{ai}+ Np{\bf t}_{b(i-1)}+MNp {\bf t}_{a(i-2)} + \dots +M^{(i-1)/2}N^{(i-1)/2}p{\bf t}_{a1}  \mod Np\Lambda) ) \mod Np\Lambda \\
 = &(Np{\bf t}_{ai}+ N^2p {\bf t}_{b(i-1)}+MN^2p {\bf t}_{a(i-2)} + \dots +M^{(i-1)/2}N^{(i-1)/2+1}p {\bf t}_{a1}  \mod N^2p\Lambda ) \mod Np\Lambda \\
 = &Np{\bf t}_{ai}+ N^2p {\bf t}_{b(i-1)}+MN^2p {\bf t}_{a(i-2)} + \dots +M^{(i-1)/2}N^{(i-1)/2+1}p {\bf t}_{a1}   \mod Np\Lambda
\end{align*}
and scales it to fully utilize  the transmit power:
\begin{align*}
&\frac{Mq}{Np} ( Np{\bf t}_{ai}+ N^2p {\bf t}_{b(i-1)}+MN^2p {\bf t}_{a(i-2)} + \dots +M^{(i-1)/2}N^{(i-1)/2+1}p {\bf t}_{a1}   \mod Np\Lambda) \\
=& Mq{\bf t}_{ai} + NMq {\bf t}_{b(i-1)} + NM^2q{\bf t}_{a(i-2)} + \dots +  N^{(i-1)/2}M^{(i-1)/2+1}q{\bf t}_{a1} \mod Mq\Lambda .
\end{align*}
This signal will be transmitted in the next block $i+1$.  Node 3 performs similar operations, decoding $q{\bf t}_{bi}+Mq {\bf t}_{a(i-1)}+NMq {\bf t}_{b(i-2)}+\dots +N^{(i-1)/2}M^{(i-1)/2}q {\bf t}_{b1} \mod Mq\Lambda$ subject to constraints (\ref{eq:23}) and (\ref{eq:43}), and transforms it into $Np{\bf t}_{bi} + MNp {\bf t}_{a(i-1)} + MN^2p{\bf t}_{b(i-2)} +\dots + M^{(i-1)/2}N^{(i-1)/2+1}p{\bf t}_{b1} \mod Mq\Lambda$, which is transmitted in the next block.
\end{itemize}

%\nrd{ \begin{remark}
%If $i$ is even, HOW TO EXPRESS?
%\end{remark}
%}

Combining all rate constraints, we obtain 
\begin{align*}
 R_{sym} < \min & \left( \left[ \frac{1}{2} \log \left(\frac{P_1}{N_2}\right)\right]^+, \left[ \frac{1}{2} \log \left(\frac{P_2}{N_3}\right)\right]^+, \left[ \frac{1}{2} \log \left(\frac{P_3}{N_4}\right)\right]^+, \right. \\
&\quad \left. \left[ \frac{1}{2} \log \left(\frac{P_4}{N_3}\right)\right]^+, \left[ \frac{1}{2} \log \left(\frac{P_3}{N_2}\right)\right]^+, \left[ \frac{1}{2} \log \left(\frac{P_2}{N_1}\right) \right]^+ \right) \end{align*}
Assuming there are $I$ blocks in total, the final achievable rate is $\frac{I-2}{I}R_{sym}$, which,  as $I \rightarrow \infty$, approaches $R_{sym}$ and we obtain \eqref{eq:rateconstraints}.
\end{proof}

In the above we had assumed power constraints of the form  $P_1 = p^2$, $P_2 = M^2q^2$, $P_3 = N^2p^2$ and $P_4 = q^2$, where $p, q \in \mathbb{R}^+$ and $M, N \in \mathbb{Z}^+$. Analogously, we may permute some of these power constraints to achieve the same region as follows:
\begin{lemma}
\label{lem:shift}
The rates achieved in Theorem \ref{thm:rate} may also be achieved when $P_1 = N^2p^2$, $P_3 = p^2$ and/or  $P_2 = q^2$, $P_4 = M^2 q^2$.
%\begin{proof}
%We only need to align the codewords from Node $1$ and $3$, and the codewords from $2$ and $4$. So it does not matter if we shift between $1$ and $3$, and/or $2$ and $4$.  The relay nodes $2$ and $4$ still decode the codewords combinations, perform ``Re-distribution Transform" and scaled to its transmitting power.  One such example is shown in Figure \ref{fig:diagram2} in Appendix \ref{app:shift}. 
%\end{proof}

\end{lemma}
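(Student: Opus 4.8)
The plan is to show that the scheme used to prove Theorem~\ref{thm:rate} is symmetric under relabeling of the two communication directions and under swapping the roles of the two users at each end, so that the alternative power parametrizations listed produce identical rate constraints. First I would observe that the channel model in Section~\ref{sec:model} is invariant (as a set of equations) under the relabeling $(1,2,3,4) \leftrightarrow (4,3,2,1)$ together with $(w_a,w_b) \leftrightarrow (w_b,w_a)$: this sends the six SNR terms in \eqref{eq:rateconstraints} to a permutation of themselves, so any achievable rate region for one channel is achievable for the relabeled channel. This already handles the ``$P_2 = q^2,\ P_4 = M^2 q^2$'' case when combined with the original ``$P_1 = p^2,\ P_3 = N^2 p^2$'' case, since relabeling maps $P_1 = p^2 \to P_4$ and $P_3 = N^2 p^2 \to P_2$, etc.

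Second, for the case $P_1 = N^2 p^2$, $P_3 = p^2$ (with $P_2, P_4$ as in the theorem), I would re-run the Block Markov encoding/decoding argument of Theorem~\ref{thm:rate} with the only change being that now Node~$1$ transmits ${\bf X}_{1i} = Np\,{\bf t}_{ai}$ and Node~$3$ transmits the \emph{un}scaled-by-$N$ version relative to before, i.e.\ the asymmetry factor $N$ between the two uplinks into Node~$2$ is now carried on the ${\bf t}_a$ stream instead of the ${\bf t}_b$ stream. Node~$2$ then decodes a combination of the form $(Np\,{\bf t}_{ai} + p\,{\bf t}_{b(i-1)} + \cdots)\bmod Np\Lambda$ and applies the Re-distribution Transform (multiply by $N$, take $\bmod\ \Lambda$ using \eqref{eq:operation1}, rescale using \eqref{eq:operation2}) to redistribute the power, exactly as in Section~\ref{sec:BC} but with the mirror-image role assignment; Lemmas~\ref{lem:decodesum}, \ref{lem:ptop}, \ref{lem:mapping}, \ref{lem:prime}, \ref{lem:codingrate} apply verbatim since they are stated for generic integer coefficients $\alpha_i, \beta_i$ and generic scaling $\theta$. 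The effective SNRs one reads off are again $P_1/N_2$, $P_2/N_3$, $P_3/N_4$, $P_4/N_3$, $P_3/N_2$, $P_2/N_1$, giving \eqref{eq:rateconstraints}. The ``and/or'' in the statement is then covered by applying the two modifications independently at the two ends, which do not interact because the transforms at Node~$2$ and Node~$3$ operate on disjoint scalings.

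The main obstacle I expect is bookkeeping rather than anything conceptual: I need to check that when the asymmetry factor is moved onto the $a$-stream at the left end (and/or onto the $b$-stream at the right end), every intermediate lattice quantization ``$\bmod\ c\Lambda$'' still has integer scaling ratios so that \eqref{eq:operation1} applies, and that the decoded finite-field combinations (of the form $w_{a(i-1)} \oplus N w_{b(i-2)} \oplus \cdots$) remain invertible for the intended receiver given its side information — which in turn requires the relevant integer coefficients to satisfy the hypothesis $\tfrac{\alpha}{P_{prime}} \notin \mathbb{Z}$ of Lemma~\ref{lem:prime}. Since all the multiplicative constants appearing are fixed products of powers of $M$ and $N$, and $P_{prime} = [2^{nR_{sym}}] \to \infty$, these constants are eventually coprime to $P_{prime}$ and the hypothesis holds, so the invertibility and the rate constraints go through unchanged. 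I would therefore present the lemma's proof as ``the scheme of Theorem~\ref{thm:rate} applies mutatis mutandis after the indicated relabeling/role-swap, yielding the same rate constraints,'' rather than re-deriving every block.
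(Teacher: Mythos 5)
Your main argument---re-running the Block Markov scheme of Theorem~\ref{thm:rate} with the asymmetry factor $N$ (resp.\ $M$) carried on the $a$-stream (resp.\ $b$-stream), and observing that Lemmas \ref{lem:decodesum}, \ref{lem:ptop}, \ref{lem:mapping}, \ref{lem:prime} and \ref{lem:codingrate} are stated generically enough to apply unchanged---is precisely what the paper does. The paper's proof of Lemma~\ref{lem:shift} simply points to the rescaled block diagram in Figure~\ref{fig:diagram2} and notes that only the codeword scalings change while the nesting (alignment) of ${\bf X}_1$ with ${\bf X}_3$ and of ${\bf X}_2$ with ${\bf X}_4$ is preserved.

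The relabeling shortcut you open with, however, does not do what you claim. Applying $(1,2,3,4)\leftrightarrow(4,3,2,1)$ to the parametrization of Theorem~\ref{thm:rate} sends $(P_1,P_2,P_3,P_4)=(p^2,M^2q^2,N^2p^2,q^2)$ to $(q^2,N^2p^2,M^2q^2,p^2)$, which after renaming $p\leftrightarrow q$ and $M\leftrightarrow N$ is again the \emph{original} parametrization, not the case $P_2=q^2$, $P_4=M^2q^2$. In terms of the pair of ratios $\bigl(P_1/P_3,\,P_2/P_4\bigr)$, the original has $(1/N^2,\,M^2)$; relabeling fixes this (up to renaming), whereas the lemma's three variants have $(N^2,\,M^2)$, $(1/N^2,\,1/M^2)$, and $(N^2,\,1/M^2)$, none of which is reached from the original by relabeling alone. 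Relabeling \emph{does} relate the first two of these variants to each other, so a cleaner use of the symmetry would be: establish $(N^2,M^2)$ by direct re-derivation, obtain $(1/N^2,1/M^2)$ from it by relabeling, and establish $(N^2,1/M^2)$---which is a fixed point of the relabeling---by re-deriving with both rescalings applied. Since the rest of your proposal already performs the direct re-derivation (as the paper does), you should drop or correct the relabeling claim; it is not needed and as stated it is false.
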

\begin{proof}
The proof follows the same lines as that of Theorem \ref{thm:rate}, and consists of the steps outlined in 
%The figure of operation procedure for the case $P_1 = p^2$, $P_3 = N^2p^2$, $P_2 = M^2 q^2$ and $P_4 = q^2$ is shown
 in Figure \ref{fig:diagram2} in Appendix \ref{app:shift}. In particular, since the nodes have different power constraints the scaling of the codewords is different. However, as in the previous Theorem, 
 we only need ${\bf X}_1$ and ${\bf X}_3$ to be aligned (nested codebooks), and ${\bf X}_2$ and ${\bf X}_4$ to be aligned. %\nrd{which is guaranteed by the new power constraints as well. ???}
 As in Theorem \ref{thm:rate}, the relay nodes again decode the sum of codewords, perform the Re-distribution Transform,  with a new power scaling to fully utilize their transmit power, and broadcast the re-distributed sum of codewords. 
\end{proof}

Theorem \ref{thm:rate} and Lemma \ref{lem:shift} both hold for powers for which
 $P_1/P_3$ and/or $P_2/P_4$ are either the squares of integers or the reciprocal of the squares of integers. However, these scenarios do not cover general power constraints with arbitrary ratios. We next present an achievable rate region for arbitrary powers, obtained by appropriately clipping the power of the nodes such that the new, lower powers are indeed either the squares or the reciprocals of the squares of integers. We then show that this clipping of the power at the nodes does not result in more than a $\frac{1}{2}\log(3)$ bits/s/Hz loss in the symmetric rate. 
 
% \nrd{Would it work for  $P_1/P_3$ and/or $P_2/P_4$  rationals? No}
 \theorem
\label{thm:general}
For the Two-way Two-relay Channel with arbitrary transmit power constraints, any rates satisfying
\begin{align}
R_a, R_b < R_{achievable} =  & \max_{P_i' \leq P_i, \frac{P_1' }{P_3' } = N^2 \text{ or } \frac{1}{N^2}, \frac{P_2' }{P_4' } = M^2 \text{ or } \frac{1}{M^2}  } \min \left( \left[ \frac{1}{2} \log \left(\frac{P_1'}{N_2}\right) \right]^+, \left[ \frac{1}{2} \log \left(\frac{P_2'}{N_3}\right)\right]^+, \right.  \nonumber \\
& \left.  \quad \quad \left[ \frac{1}{2} \log \left(\frac{P_3'}{N_4}\right)\right]^+,\left[\frac{1}{2} \log \left(\frac{P_4'}{N_3}\right) \right]^+, \left[\frac{1}{2} \log \left(\frac{P_3'}{N_2}\right)\right], \left[ \frac{1}{2} \log \left(\frac{P_2'}{N_1}\right) \right]^+ \right) \label{eq:achieve}
\end{align}
for some $N, M \in \mathbb{Z}^+$  and $ i \in \{1,2,3,4\}$, are achievable. This rate region is within $\frac{1}{2} \log 3$ bit/Hz/s per user from the symmetric rate capacity.
\begin{proof}
Since $P_1/P_3$ and/or $P_2/P_4$ are in general neither the squares or the reciprocals of the squares of integers, we cannot directly apply Theorem \ref{thm:rate}. Instead, we first truncate the transmit powers $P_i$ to $P_i'$ such that  $P_i'$ satisfy  either the constraints of Theorem \ref{thm:rate} or Lemma \ref{lem:shift}. The achievable rate region then follows immediately for the reduced power constraints $P_i'$. For example, if $P_1 = 1$ and $P_3 = 3.6$, we may choose $P_1' = 0.9$ and $P_3' = 3.6$ , so that $P_3'/P_1' = 2^2$. Optimizing over the truncated or clipped powers yields the achievable rate region stated in Theorem \ref{thm:general}.

An outer bound to the symmetric capacity of the AWGN Two-way Two-relay Channel is given by the minimum of the all the point-to-point links, i.e.
\begin{equation}
R_a, R_b < R_{outer} = \min \left(C\left(\frac{P_1}{N_2}\right), C\left(\frac{P_2}{N_3}\right), C\left(\frac{P_3}{N_4}\right), C\left(\frac{P_4}{N_3}\right), C\left(\frac{P_3}{N_2}\right), C\left(\frac{P_2}{N_1}\right) \right).  \label{eq:outer}
\end{equation}
%Notice we are evaluating the symmetric rates capacity, so the rates constraints are effective for both 
To evaluate the gap between (\ref{eq:achieve}) and (\ref{eq:outer}): %is to compare each $\frac{1}{2} \log \left( \frac{P_i'}{N_j} \right)$ and $C\left( \frac{P_i}{N_j} \right)$, such as  $\frac{1}{2} \log \left( \frac{P_1'}{N_2} \right)$ and $C\left( \frac{P_1}{N_2} \right)$.  
\begin{align}
&R_{achievable} + \frac{1}{2} \log 3 \\
\overset{(a)}{=}&\max_{P_1' \leq P_1, P_3' \leq P_3, \frac{P_1' }{P_3' } = N^2 \text{ or } \frac{1}{N^2}} \min \left( \left[ \frac{1}{2} \log \left(\frac{P_{1}'}{N_2}\right) \right]^+, \left[ \frac{1}{2} \log \left(\frac{P_3'}{N_4}\right)\right]^+, \left[ \frac{1}{2} \log \left(\frac{P_3'}{N_2}\right) \right]^+ \right) + \frac{1}{2} \log 3 \label{eq:1}\\
=& \max_{P_1' \leq P_1, P_3' \leq P_3, \frac{P_1' }{P_3' } = N^2 \text{ or } \frac{1}{N^2}} \max \left( \min \left( \frac{1}{2} \log \left(\frac{3P_{1}'}{N_2}\right), \frac{1}{2} \log \left(\frac{3P_3'}{N_4}\right), \frac{1}{2} \log \left(\frac{3P_3'}{N_2}\right)\right), \frac{1}{2} \log 3 \right) \label{eq:2} \\
%=& \max \left( \min \left( \frac{1}{2} \log \left(\frac{3P_{1}''}{N_2}\right), \frac{1}{2} \log \left(\frac{3P_3''}{N_4}\right), \frac{1}{2} \log \left(\frac{3P_3''}{N_2}\right)\right), \frac{1}{2} \log 3\right) \label{eq:3}\\
\overset{(b)}{\geq}&  \max \left( \min \left( \frac{1}{2} \log \left(\frac{3P_{1}^{\star} }{N_2}\right), \frac{1}{2} \log \left(\frac{3P_3^{\star}}{N_4}\right), \frac{1}{2} \log \left(\frac{3P_3^{\star}}{N_2}\right)\right), \frac{1}{2} \log 3\right) \label{eq:4}\\
\overset{(c)}{\geq}& \min \left( \frac{1}{2} \log \left(1+ \frac{P_1}{N_2} \right), \frac{1}{2} \log \left(1+ \frac{P_3}{N_4} \right), \frac{1}{2} \log \left(1+ \frac{P_3}{N_2} \right) \right)  \label{eq:5}\\
\geq& R_{outer}.
\end{align}
%where in (a),  %$P_i''$ ($i= \{1,3\}$) is the value of $P_i'$ after optimization and 
The first equality (a) follows from an assumption that WLOG, one of $\left[ \frac{1}{2} \log \left(\frac{P_{1}'}{N_2}\right)\right]^+,$ $ \left[ \frac{1}{2} \log \left(\frac{P_3'}{N_4}\right)\right]^+,$ or $ \left[ \frac{1}{2} \log \left(\frac{P_3'}{N_2}\right) \right]^+$ is the tightest constraint.  
%\nrd{did not understand step \eqref{eq:2} -- extra max and then the $\frac{1}{2}\log(3)$ stays?}
%$P_i^{\star}$ ($i= \{1,3\}$) is one choice of $P_i'$ described in Appendix \ref{app:gap}. 
The first inequality (b) follows since the rates achieved by the optimized powers must be larger than those achieved by one particular strategy that meets the constraints -- in this case the strategy that yields the $P_i^{\star}$ which we construct  in Appendix \ref{app:gap}. Inequality (c) follows from the fact that $2P_i^{\star} \geq P_i$, as also shown in Appendix \ref{app:gap}. 
Finally, we bound $ \frac{1}{2} \log \left(\frac{3P_{1}^{\star}}{N_2}\right)$ with 
$\frac{1}{2} \log \left(1+ \frac{P_1}{N_2} \right)$ as follows: If $\frac{P_1}{N_2} \geq 2$, it follows that $\frac{P_1^{\star}}{N_2} \geq 1$, and hence  $\frac{1}{2} \log \left(\frac{3P_{1}^{\star}}{N_2} \right) \geq \frac{1}{2} \log \left(1+ \frac{P_1}{N_2} \right) $. Otherwise,  $\frac{1}{2} \log \left(1+ \frac{P_1}{N_2} \right) < \frac{1}{2} \log 3$. Similarly, we may bound $ \frac{1}{2} \log \left(\frac{3P_{3}^{\star}}{N_4}\right)$ and $ \frac{1}{2} \log \left(\frac{3P_{3}^{\star}}{N_2}\right)$ with $\frac{1}{2} \log \left(1+ \frac{P_3}{N_4} \right)$ and $\frac{1}{2} \log \left(1+ \frac{P_3}{N_2} \right)$ respectively. \\%Thus, the inequality (\ref{eq:5}) holds. 
%Thus, $\left[ \frac{1}{2} \log \left(\frac{3P_{1}''}{N_2} \right) \right]^+ \geq \frac{1}{2} \log \left(1+ \frac{P_1}{N_2} \right) $
\end{proof}

\begin{remark}
We achieve a constant gap to capacity for the symmetric rate of $\frac{1}{2} \log (3)$ bit/s/Hz. The only other scheme we are aware of that has been shown to achieve a constant gap is noisy network coding
 \cite{lim2011noisy} which achieves a larger gap of $1.26$ bits/s/Hz to the capacity. The improvement in rates of the proposed scheme may be attributed to the removal of the noise at intermediate relays (it is a lattice based Decode and Forward scheme) without the sum-rate constraints that would be needed in i.i.d. random coding based Decode-and-Forward schemes. 
% We also believe that Decode-and-Forward, which requires the relay nodes to decode the individual messages, cannot achieve a constant gap within the capacity. 
\end{remark}

\section{Extensions to half-duplex channels and more than two relays}
\label{sec:extra}

We now extend our results to half-duplex channels and to two-way relay channels with more than two relays. 

\subsection{Half-duplex}

The proposed lattice coding scheme may be generalized to channels with half-duplex nodes, i.e. in which a node may either transmit or receive at a given time but not both.  The scheme is illustrated in Figure \ref{fig:halfduplex}, where we see that the proof generally mimics the full-duplex case (Theorem \ref{thm:rate}, Lemma \ref{lem:shift} and Theorem \ref{thm:general}), but that each  phase (or block) in the full-duplex case is divided into two phases / blocks in the half-duplex case, as nodes may not transmit and receive at the same time.  Thus, one may achieve half the rates as in the full duplex case, i.e. %the achievable rates in the half-duplex case are half of the rates in the full-duplex case:
\[ R_a, R_b < R_{half-duplex} =  \frac{1}{2} R_{achievable}, \]
where $R_{achievable}$ is expressed in Theorem \ref{thm:general}.

\subsection{More than two relays}
This lattice coding scheme may also be generalized to more than two relays. For example, for the Two-way Three-relay Channel with five nodes: $1 \leftrightarrow 2 \leftrightarrow 3 \leftrightarrow 4 \leftrightarrow 5$, we may apply the same strategy by aligning the codewords between $1$ and $3$, $3$ and $5$, and $2$ and $4$ (i.e. nest the corresponding codebooks). To align or nest the codebooks, one may truncate the transmit powers so that $\frac{P_1'}{P_3'}$, $\frac{P_3'}{P_5'}$ and $\frac{P_2'}{P_4'}$ are either  squares, or the reciprocals of squares of integers.  By extending our Block Markov strategy, the final achievable rate region would be:
\begin{align*}
R_a, R_b <   \max_{\tiny \begin{array}{c}P_i' \leq P_i, \\ \frac{P_1' }{P_3' } = N^2 \text{ or } \frac{1}{N^2}, \\ \frac{P_2' }{P_4' } = M^2 \text{ or } \frac{1}{M^2}, \\  \frac{P_3' }{P_5' } = K^2 \text{ or } \frac{1}{K^2} \end{array}} \min_{\tiny \begin{array}{c} k=\{1,2,3,4\}, \\  j=\{2,3,4,5\}\end{array}} & \left( \left[ \frac{1}{2} \log \left( \frac{P_{k}'}{N_{k+1}}\right) \right]^+, 
\left[ \frac{1}{2} \log \left( \frac{P_{j}'}{N_{j-1}} \right) \right]^+ \right). 
\end{align*}
where $M, N, K \in \mathbb{Z}^+$ and $i= \{1,2,3,4,5\}$.
\begin{figure}
%\begin{figure}
\centering
\includegraphics[width=10cm]{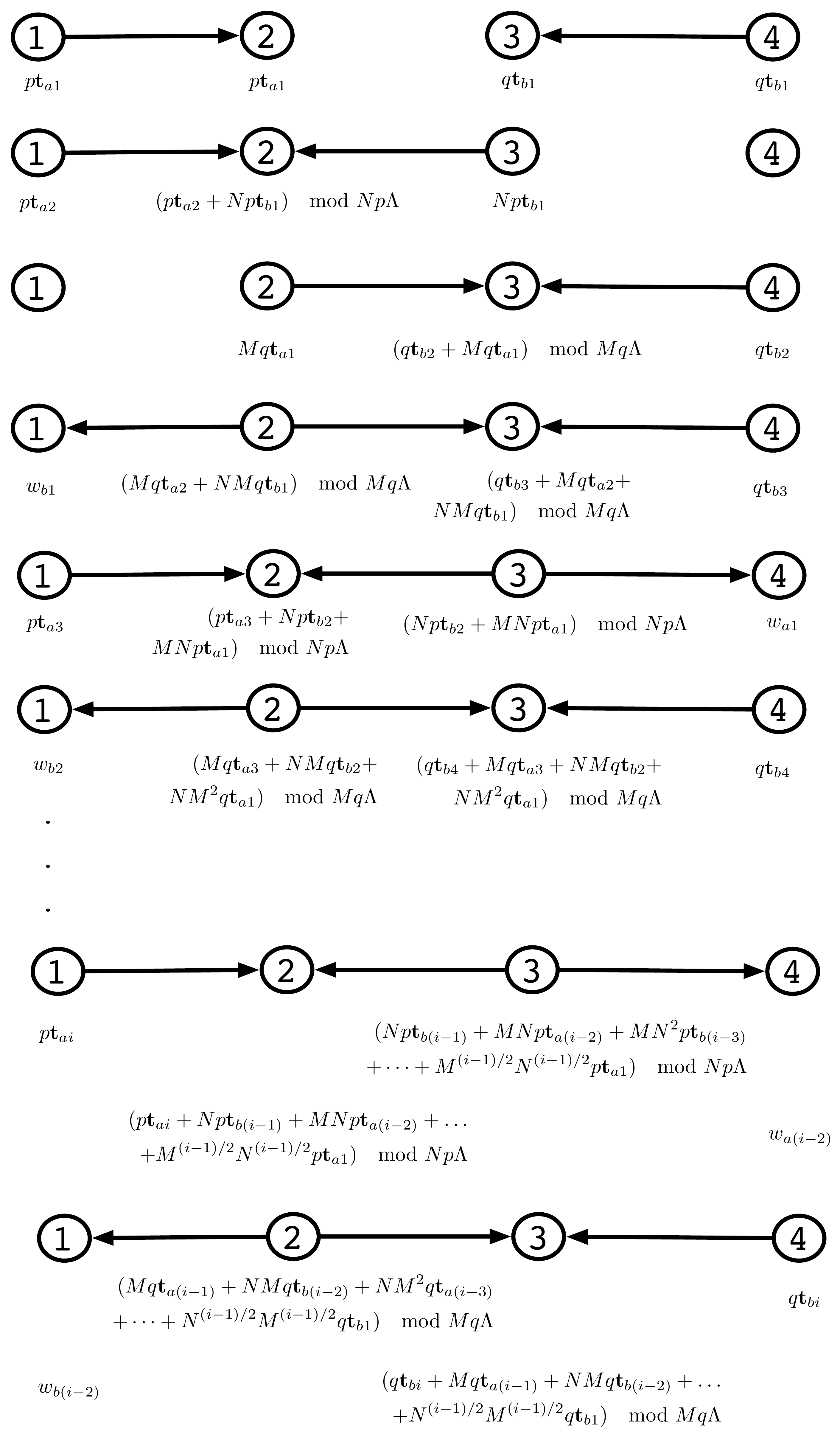}
\caption{Half-duplex case.}
\label{fig:halfduplex}
\end{figure}

%\nrd{I believe we need to compare to existing work -- can we plot our rates compared to Popovski's DF and AF schemes and noisy network coding? We could also say that their schemes cannot achieve a constant gap, or a very bad one if you don't want to plot. BUT, I think we MUST put our work in some kind of context / relation with existing work. }

\section{Conclusion}
\label{sec:conclusion}

We proposed a lattice coding scheme for the AWGN Two-way Two-relay Channel ($1\leftrightarrow 2 \leftrightarrow 3 \leftrightarrow 4$) which achieves within $\frac{1}{2} \log 3$ bit/Hz/s from the symmetric rate capacity. This scheme may be generalized to half-duplex nodes, and two-way channels with more than two relays.  In our scheme,  each relay decodes a sum of codewords as all transmitted signals are properly chosen lattice codewords, performs the ``Re-distribution Transform" which maps the decoded lattice point to another so as to fully utilize its transmit power, and broadcasts this transformed, scaled, lattice codeword. All decoders are lattice decoders and only a single nested lattice codebook pair is needed in our scheme. 
Note that this work results in a symmetric rate region which means the rates of both directions have to satisfy the constraints of all the links. One interesting open question is whether one can derive an achievability scheme which would result in an asymmetric rate region (where rates of each direction are only  constrained by links for that direction and the two directions would thus not interfere), and whether this would result in better rates. %That is, one in which the rates of each direction are only  constrained by links for that direction, i.e. the two information stream do not interfere each other. 
%The intuition to solve this problem is to filter the information combination from each direction at each node. %\nrd{in general need some more intuition about what is key: the power redistribution, the iterative canceling, the mapping back and that key lemma 2}

\appendix
\subsection{Multi-phase Block Markov achievability strategy for  Lemma \ref{lem:shift}}
\label{app:shift}
The achievable rate region for Lemma \ref{lem:shift} is the same as that for Theorem \ref{thm:rate}; the achievability strategy is essentially the same, with slight variations on the re-scaling to meet the different power constraints. The details of who transmits and decodes what in each phase is outlined in Figure \ref{fig:diagram2}.
\begin{figure}
%\begin{figure}
\centering
\includegraphics[width=14cm]{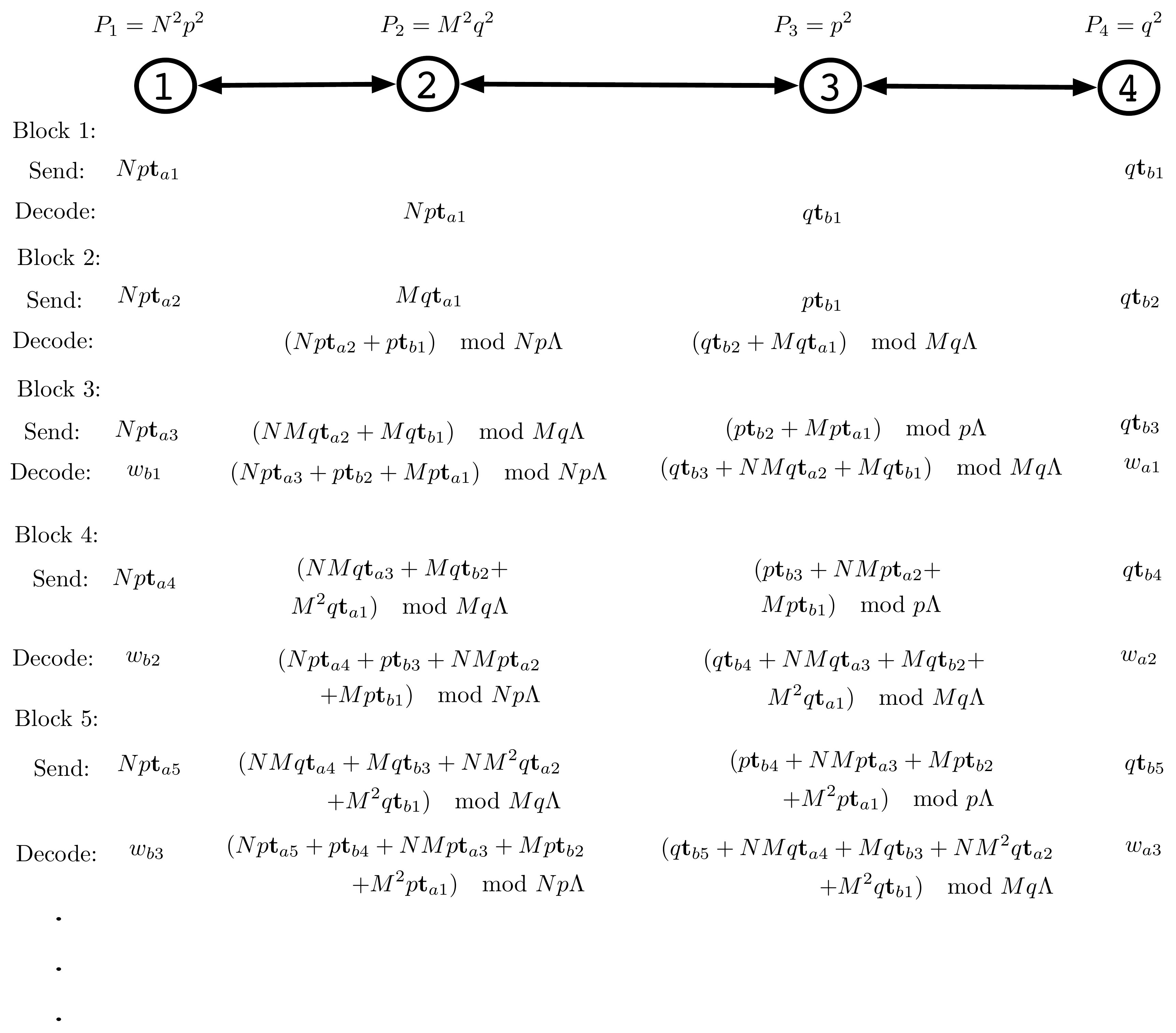}
\caption{Multi-phase Block Markov achievability strategy for  Lemma \ref{lem:shift}.} %\nrd{refer to this in text!! this is lemma 7 i guess which needs more explanation in words / intuition}}
\vspace{-1cm}
\label{fig:diagram2}
\end{figure}

\subsection{A particular choice of $P_i^{\star}$ and proof that $2P_{i}^{\star} \geq P_{i}$ ($i \in \{1,3\}$) in Theorem \ref{thm:general} }
\label{app:gap}
WLOG, we assume $P_3 \geq P_1$. Then, $m^2 \leq \frac{P_3}{P_1} \leq (m+1)^2$ for some integer $m \in \mathbb{Z}^+$. Consider 
the following strategy for choosing $P_i^{\star}$ such that $\frac{P_3^{\star}}{P_1^{\star}}$ is a non-zero integer squared or the inverse 
of an integer squared: If $m^2 \leq \frac{P_3}{P_1}\leq m(m+1)$, we choose $ P_3^{\star}= m^2P_1$ and $ P_1^{\star}= P_1$. Then $ \frac{P_3^{\star}}{P_3} = \frac{m^2P_1}{P_3} \geq \frac{m^2P_1}{m(m+1)P_1} \geq \frac{1}{2}$. Thus, $2P_3^{\star} \geq P_3$ and 
$P_1^{\star} = P_1$. Otherwise if $m(m+1) \leq \frac{P_3}{P_1}\leq (m+1)^2$, we choose $ P_1^{\star} = \frac{1}{(m+1)^2} P^3$ and $ P_3^{\star} = P_3$. Then $\frac{P_1^{\star}}{P_1} = \frac{P_3}{(m+1)^2P_1} \geq \frac{m(m+1)P_1}{(m+1)^2 P_1} \geq \frac{1}{2}$. Thus $2P_1^{\star} \geq P_1$ and $P_3^{\star} = P_3$. In general, this strategy ensures that $2P_i^{\star} \geq P_i$. %Since the optimized strategy in (\ref{eq:3})is no worse than this one, $2P_i'' \geq P_i$. 

\bibliographystyle{IEEEtran}
\bibliography{refs}

% Generated by IEEEtran.bst, version: 1.13 (2008/09/30)
\begin{thebibliography}{10}
\providecommand{\url}[1]{#1}
\csname url@samestyle\endcsname
\providecommand{\newblock}{\relax}
\providecommand{\bibinfo}[2]{#2}
\providecommand{\BIBentrySTDinterwordspacing}{\spaceskip=0pt\relax}
\providecommand{\BIBentryALTinterwordstretchfactor}{4}
\providecommand{\BIBentryALTinterwordspacing}{\spaceskip=\fontdimen2\font plus
\BIBentryALTinterwordstretchfactor\fontdimen3\font minus
  \fontdimen4\font\relax}
\providecommand{\BIBforeignlanguage}[2]{{%
\expandafter\ifx\csname l@#1\endcsname\relax
\typeout{** WARNING: IEEEtran.bst: No hyphenation pattern has been}%
\typeout{** loaded for the language `#1'. Using the pattern for}%
\typeout{** the default language instead.}%
\else
\language=\csname l@#1\endcsname
\fi
#2}}
\providecommand{\BIBdecl}{\relax}
\BIBdecl

\bibitem{Song:lattice}
Y.~Song and N.~Devroye, ``Lattice codes for the gaussian relay channel:
  Decode-and-forward and compress-and-forward,''
  \url{http://arxiv.org/abs/1111.0084}, 2011.

\bibitem{Nazer2011compute}
B.~Nazer and M.~Gastpar, ``{Compute-and-forward: Harnessing interference
  through structured codes},'' \emph{IEEE Trans. Inf. Theory}, vol.~57, no.~10,
  pp. 6463--6486, 2011.

\bibitem{Nam:IEEE}
W.~Nam, S.~Y. Chung, and Y.~Lee, ``{Capacity of the Gaussian Two-Way Relay
  Channel to Within 1/2 Bit},'' \emph{IEEE Trans. Inf. Theory}, vol.~56,
  no.~11, pp. 5488--5494, Nov. 2010.

\bibitem{Narayanan:2010}
M.~P. Wilson, K.~Narayanan, H.~Pfister, and A.~Sprintson, ``Joint physical
  layer coding and network coding for bi-directional relaying,'' \emph{IEEE
  Trans. Inf. Theory}, vol.~56, no.~11, pp. 5641--5654, Nov. 2010.

\bibitem{Erez:2004}
U.~Erez and R.~Zamir, ``Achieving $\frac{1}{2}\log(1+{SNR})$ on the
  {A}{W}{G}{N} channel with lattice encoding and decoding,'' \emph{IEEE Trans.
  Inf. Theory}, vol.~50, no.~10, pp. 2293--2314, Oct. 2004.

\bibitem{Zamir:2002:binning}
R.~Zamir, S.~Shamai, and U.~Erez, ``{Nested Linear/Lattice codes for structured
  multiterminal binning},'' \emph{IEEE Trans. Inf. Theory}, vol.~48, no.~6, pp.
  1250--1276, 2002.

\bibitem{Nokleby2012}
M.~Nokleby and B.~Aazhang, ``Cooperative compute-and-forward,''
  \url{http://www.ece.rice.edu/~msn1/papers/transIT.pdf}.

\bibitem{Cover:1979}
{T. M. Cover and A. El Gamal}, ``Capacity theorems for the relay channel,''
  \emph{IEEE Trans. Inf. Theory}, vol.~25, no.~5, pp. 572--584, Sep. 1979.

\bibitem{gelfand}
S.~Gel'fand and M.~Pinsker, ``Coding for channels with random parameters,''
  \emph{Probl. Contr. and Inf. Theory}, vol.~9, no.~1, pp. 19--31, 1980.

\bibitem{costa}
M.~Costa, ``Writing on dirty paper,'' \emph{IEEE Trans. Inf. Theory}, vol.
  IT-29, pp. 439--441, May 1983.

\bibitem{Philosof:DirtyMAC}
T.~Philosof, R.~Zamir, and A.~Khisti, ``{Lattice strategies for the dirty
  multiple access channel},'' \emph{{\rm Submitted to} IEEE Trans. Inf.
  Theory}, 2009.

\bibitem{ordentlich2012}
O.~Ordentlich, U.~Erez, and B.~Nazer, ``The approximate sum capacity of the
  symmetric k-user gaussian interference channel,'' \emph{{\rm Submitted to}
  IEEE Trans. Inf. Theory}, Jun. 2012.

\bibitem{Popovski:2006a}
P.~Popovski and H.~Yomo, ``Bi-directional amplification of throughput in a
  wireless multi-hop network,'' in \emph{Proc. IEEE Veh. Technol. Conf. -
  Spring}, Melbourne, May 2006, pp. 588--593.

\bibitem{Kim:ISIT2009}
S.~Kim, N.~Devroye, and V.~Tarokh, ``A class of bi-directional multi-relay
  protocols,'' in \emph{Proc. IEEE Int. Symp. Inf. Theory}, Seoul, Jun. 2009,
  pp. 349--353.

\bibitem{Kim:multi}
\BIBentryALTinterwordspacing
------, ``Bi-directional half-duplex protocols with multiple relays,'' 2008.
  [Online]. Available: \url{http://arxiv.org/abs/0810.1268}
\BIBentrySTDinterwordspacing

\bibitem{Pooniah:2008}
J.~Pooniah and L.-L. Xie, ``An achievable rate region for the two-way two-relay
  channel,'' in \emph{Proc. IEEE Int. Symp. Inf. Theory}, Jul. 2008, pp.
  489--493.

\bibitem{nam:2009nested}
W.~Nam, S.-Y. Chung, and Y.~Lee, ``Nested lattice codes for gaussian relay
  networks with interference,'' \emph{IEEE Trans. Inf. Theory}, vol.~57,
  no.~12, pp. 7733--7745, Dec. 2011.

\bibitem{loeliger1997averaging}
H.~Loeliger, ``{Averaging bounds for lattices and linear codes},'' \emph{IEEE
  Trans. Inf. Theory}, vol.~43, no.~6, pp. 1767--1773, 1997.

\bibitem{zamir-lattices}
R.~Zamir, ``{Lattices are everywhere},'' in \emph{4th Annual Workshop on
  Information Theory and its Applications, UCSD}, 2009.

\bibitem{lim2011noisy}
S.-H. Lim, Y.-H. Kim, A.~El~Gamal, and S.-Y. Chung, ``{Noisy network coding},''
  \emph{IEEE Trans. Inf. Theory}, vol.~57, no.~5, pp. 3132--3152, May 2011.

\end{thebibliography}
\end{document}